\DeclarePairedDelimiter\floor{\lfloor}{\rfloor}
\theoremstyle{plain}
\newtheorem{thm}{Theorem}
\newtheorem{lem}[thm]{Lemma}
\title{Cost Effective Campaigning in Social Networks}
\author{Bhushan Kotnis\footnote{Indian Institute of Science, bkotnis@dese.iisc.ernet.in} and Joy Kuri\footnote{Indian Institute of Science}}
\begin{document}
\maketitle




\begin{abstract}
Campaigners are increasingly using online social networking platforms for promoting products, ideas and information. A popular method of promoting a product or even an idea is incentivizing individuals to evangelize the idea vigorously by providing them with referral rewards in the form of discounts, cash backs, or social recognition. Due to budget constraints on scarce resources such as money and manpower, it may not be possible to provide incentives for the entire population, and hence incentives need to be allocated judiciously to appropriate individuals for ensuring the highest possible outreach size. We aim to do the same by formulating and solving an optimization problem using percolation theory. In particular, we compute the set of individuals that are provided incentives for minimizing the expected cost while ensuring a given outreach size. We also solve the problem of computing the set of individuals to be incentivized for maximizing the outreach size for given cost budget. The optimization problem turns out to be non trivial; it involves quantities that need to be computed by numerically solving a fixed point equation.  Our primary contribution is, that for a fairly general cost structure, we show that the optimization problems can be solved by solving a simple linear program. We believe that our approach of using percolation theory to formulate an optimization problem is the first of its kind.
\end{abstract}


\section{Introduction \label{sec:Introduction}}
Online social networking platforms are being increasingly used by campaigners, activists and marketing managers for promoting ideas, brands and products. In particular, the ability to recommend news articles  \cite{Leskovec2009}, videos, and even products \cite{Leskovec2006} by friends and acquaintances through online social networking platforms is being increasingly recognized by marketing gurus as well as political campaigners and activists. Influencing the spread of content through social media enables campaigners to mold the opinions of a large group of individuals. In most cases, campaigners and advertisers aim to spread their message to as many individuals as possible while respecting budget constraints. This calls for a judicious allocation of limited resources, like money and manpower, for ensuring highest possible outreach, i.e., the proportion of individuals who receive the message.    
\par
Individuals share information with other individuals in their social network using Twitter tweets, Facebook posts or simply face to face meetings. These individuals may in turn pass the same to their friends and so on, leading to an information epidemic. However, individuals may also become bored or disillusioned with the message over time and decide to stop spreading it. { Past research suggests that such social effects may lead to opinion polarization in social systems \cite{Sinha2006}}. This can be exploited by a campaigner who desires to influence such spreading or opinion formation by incentivizing individuals to evangelize more vigorously by providing them with referral rewards in the form of discounts, cash back or other attractive offers. Due to budget constraints, it may not be feasible to incentivize all, or even a majority of the population. Individuals have varying amount of influence over others, e.g., ordinary individuals may have social connections extending to only close family and friends, while others may have a large number of social connections which can enable them to influence large groups \cite{Goldenberg2009}. Thus, it would seem that incentivizing highly influential individuals would be the obvious strategy. However, recruiting influential people can be very costly, which may result in the campaigner running out of funds after recruiting just a handful of celebrities, which in turn may  result in suboptimal outreach size. 
\par
 A resource constrained campaigner, for a given cost budget, may want to maximize the proportion of informed individuals, while other campaigners who care more about campaign outreach than resource costs, may desire to minimize costs for achieving a given number of informed individuals. We  address both the resource allocation challenges by formulating and solving two optimization problems with the help of \emph{bond percolation theory}. 
 \par
A similar problem of preventing epidemics through vaccinations has received a lot of attention \cite{Cohen2003,Shaw2010,Ruan2012,Starnini2013,Peng2013}. However, in these problems the cost of vaccination is uniform for all individuals, and hence it is sufficient to calculate the minimum number of vaccinations. Information diffusion can also be maximized by selecting an optimal set of seeds, i.e., individuals best suited to \emph{start} an epidemic \cite{Kempe2003,Chen2009a,Chen2010a}. This is different from our strategy which involves incentivizing individuals to \emph{spread} the message.  It is possible to address the problem posed here using optimal control theory, which involves computing the optimal resource allocation in real time for ensuring maximum possible outreach size by a give deadline \cite{Karnik2012,Dayama2012, Kandhway2014a,Kandhway2014,Kandhway2014b}.  However, the optimal control solution is not only difficult to compute, but also very hard to implement as it requires a centralized real time controller. Furthermore, recent work, \cite{Karnik2012,Dayama2012, Kandhway2014a,Kandhway2014,Kandhway2014b}, on optimal campaigning in social networks does not address the problem of minimizing the cost while gurantering an outreach size. Our formulation allows us to solve both the problems.
\par
Our model assumes two types of individuals viz. the \emph{`ordinary'} and the \emph{`selected'}, and they are connected to one another through a social network.   Before the campaign starts, the selected individuals are incentivized to spread the message more vigorously than the ordinary. We use the \emph{Susceptible Infected Recovered} (SIR) model for modeling the information epidemic.  For a given set of selected individuals, we first calculate the size of the information outbreak using network percolation theory, and then find the set of selected nodes which, 1. minimizes the cost for achieving a given proportion of informed individuals, and 2. maximize the fraction of informed individual for a given cost budget. We believe that our approach of using percolation theory to formulate an optimization problem is the first of its kind.
\par
The detailed model description can be found in Sec. \ref{sec:Model},  percolation analysis in Sec. \ref{sec:Analysis}, the problem formulation in Sec. \ref{sec:Problem Formulation}, numerical results in Sec. \ref{sec:Numerical Results}, and finally conclusions are discussed in Sec.  \ref{sec:Conclusion}.
\section{Model \label{sec:Model}}
We divide the total population of $N$ individuals in two types: the ordinary (type $1$) and the selected (type $2$). Before the campaign starts selected individuals are provided incentives to spread the information more vigorously. These individuals are connected with one another through a social network, which is represented by an undirected graph (network). Nodes represent individuals while a link embodies the communication pathways between individuals. 

Let $P(k)$ be the degree distribution of the social network. For analytical tractability, we assume that the network is uncorrelated \cite{Barrat2008}. We generate  an uncorrelated network using the configuration model \cite{molloy1995}. A sequence of $N$ integers, called the degree sequence, is obtained by sampling the degree distribution. Thus each node is associated with an integer which is assumed to be the number of half edges or stubs associated with the node. Assuming that the total number of stubs is even, each stub is chosen at random and joined with another randomly selected stub. The process continues until all stubs are exhausted. Self loops and multiple edges are possible, but the number of such self loops and multiple edges goes to zero as $N \to \infty$ with high probability. We assume that $N$ is large but finite.  Let $\phi (k)$ be the proportion of individuals with $k$ degrees that are provided incentives for vigorously spreading the message, i.e., proportion of nodes with degree $k$ that are type 2 nodes. The goal is to find the optimum $\phi(k)$ for maximizing the epidemic size (or minimizing the cost). The actual individuals can be identified by sampling from a population of individuals with degree $k$ with probability $\phi(k)$.
\par
We assume that the information campaign starts with a randomly chosen individual, who may pass the information to her neighbors, who in turn may pass the same to their neighbors and so on. However, as the initial enthusiasm wanes, individuals may start loosing interest in spreading the information message. This is similar to the diffusion of infectious diseases in a population of susceptible individuals. Since, we account for individuals loosing interest in spreading the message, we use a continuous time SIR process to model the information diffusion. The entire population can be divided into three classes, those who haven't heard the message (susceptible class), those who have heard it and are actively spreading it (infected class) and those who have heard the message but have stopped spreading it (recovered class). 
\par
Let $\beta_1$ be the rate of information spread for an ordinary node (Type 1), while $\beta_2$ for a selected node (Type 2). In other words, the probability that a type $i$ individual `infects' her susceptible neighbors in small time $dt$ is $\beta_idt +o(dt)$. Note that this is independent  of the type of the susceptible node. Let $\mu_i$ be the rate at which type $i$ infected individuals move to the recovered state. The larger the $\mu_i$ the lesser the time an individual spends in spreading the message. Since type $2$ individuals are incentivized to spread information more vigorously, $\beta_2 > \beta_1$ and $\mu_2 < \mu_1$. Let $T_i$ be the probability that a type $i$ infected node infects its susceptible neighbors (any type) before it recovers ($i \in \{0,1\}$). It can be easily shown that $T_i = \frac{\beta_i}{\beta_i + \mu_i}$, see \cite{Newman2002}. Therefore, $T_2 > T_1$. $T_i$ can be interpreted as the probability that a link connecting type $i$ infected node to any susceptible node is occupied. We refer to such links as type $i$ links and $T_i$ the occupation probability for link of type $i$. This mapping allows us to apply bond percolation theory for obtaining the size of the information epidemic \cite{Newman2010}. 

 \section{Analysis \label{sec:Analysis}}
We first aim to calculate the proportion of individuals who have received the message, or in other words, the proportion of recovered individuals at $t \to \infty$. Let $P(k' \mid k)$ be the probability of encountering a node of degree $k'$ by traversing a randomly chosen link from a node of degree $k$. In other words, $P(k' \mid k)$ is the probability that a node with degree $k$ has a neighbor with degree $k'$. For a network generated by configuration model, $P(k' \mid k)= \frac{k'P(k')}{\langle k \rangle}$ \cite{Newman2010}, where $\langle k^i \rangle$ is the $i^{th}$ moment of $P(k)$.
 \par
  Let $q$ be the probability of encountering a type 2 node by traversing a randomly chosen link from a node of degree $k$. Therefore, $q = \sum\limits_{k'=1}^{\infty}Pr($Neighboring node is type 2 $\mid$ neighboring node has degree $k')\cdot Pr($Neighboring node has degree $k'\mid$ original node has degree $k)$. 
   \begin{align*}
   q= \frac{1}{\langle k \rangle}\sum_{k=1}^{\infty}k\phi(k)P(k)
   \end{align*}
   The probability that a randomly chosen node has $k_1$ type 1 and $k_2$ type 2 neighbors $= P(k_1,k_2) = \sum\limits_{k:k=k_1+k_2}^{\infty} Pr(k_1,k_2\mid$node has degree $k)P(k) $. 
    \par
    For a large $N$, the event that a given node has degree $k$, can be approximated to be independent of the event that another node, having a common neighbor with the given node, has degree $k'$. This is true since the degree sequence is generated by independent samples from the distribution, and for a large $N$ the effect of sampling without replacement is negligible. The probability that a node is selected (type 2), is a function of its degree, hence the event that a node is type 1 (or 2) is independent of the event that any other node is type 1 (or 2). This allows us to write:   
    \begin{align*}
     P(k_1,k_2)=  {k_1+k_2 \choose k_2}q^{k_2}(1-q)^{k_1}P(k_1+k_2) 
    \end{align*}
Let $Q(k)$ be the excess degree distribution, i.e., the degree distribution of a node arrived at by following a randomly chosen link without counting that link. For the configuration model $Q(k) = (k+1)P(k+1)/ <k>$. Let $Q(k_1,k_2)$ be the excess degree distribution for connections to type 1 and type 2 nodes.
  \begin{align*}
  Q(k_1,k_2) = {k_1+k_2 \choose k_2}q^{k_2}(1-q)^{k_1}Q(k_1+k_2)
  \end{align*}
  Let $\tilde{P}(\tilde{k}_1,\tilde{k}_2)$ and $\tilde{Q}(\tilde{k}_1,\tilde{k}_2)$ be the distribution and the excess distribution of the number of type 1 and type 2 neighbors that have received the information message. In other words the distribution and the excess distribution of type $i$ occupied links.
   {\small
   \begin{align*}
   \tilde{P}(\tilde{k}_1,\tilde{k}_2) &= \sum_{k_1=\tilde{k}_1}^{\infty}\sum_{k_2=\tilde{k}_2}^{\infty}P(k_1,k_2)\prod_{i=1}^{2}{k_i \choose \tilde{k}_i}T_i^{\tilde{k}_i}(1-T_i)^{k_i-\tilde{k}_i} \\
   \tilde{Q}(\tilde{k}_1,\tilde{k}_2) &= \sum_{k_1=\tilde{k}_1}^{\infty}\sum_{k_2=\tilde{k}_2}^{\infty}Q(k_1,k_2)\prod_{i=1}^{2}{k_i \choose \tilde{k}_i}T_i^{\tilde{k}_i}(1-T_i)^{k_i-\tilde{k}_i}
   \end{align*}
   }
   \begin{table}[!t]
         \centering
           \begin{tabular}{l  l}
      \hline
      Generating function & Distribution \\
       \hline
      $G(u_1,u_2)$ & $P(k_1,k_2)$ \\
       $F(u_1,u_2)$ & $Q(k_1,k_2)$ \\
          $\tilde{G}(u_1,u_2)$ & $\tilde{P}(\tilde{k}_1,\tilde{k}_2)$ \\
           $\tilde{F}(u_1,u_2)$ & $\tilde{Q}(\tilde{k}_1,\tilde{k}_2)$ \\
      $\tilde{H}_i(u_1,u_2)$ & Proportion of type 1 and type 2 nodes, \\ 
      & who have received the message, in a component \\   
      & reached from a type $i$ link.\\
          $\tilde{J}_i(u_1,u_2)$ &  No. of type 1 and type 2 nodes \\  
         & who have received the message, in a component \\ 
            & reached from a node $i$.\\
             $\tilde{J}(u_1,u_2)$ &  No. of type 1 and type 2 nodes \\  
      & who have received the message, in a component \\ 
         & reached from a randomly chosen node.\\
        \hline
        \end{tabular}
        \caption{List of probability generating functions.}
        \label{table:pgf}
              \end{table}    
               The probability generating functions  for the distributions used in the analysis above are listed in Table \ref{table:pgf}.  For example $G(u_1,u_2)$ is given by :
                \begin{align*}
                G(u_1,u_2) = \sum\limits_{k_1,k_2=0}^{\infty} u_1^{k_1}u_2^{k_2}P(k_1,k_2)
                \end{align*} 
               Now, $\tilde{G}(u_1,u_2)$ is given by
                \begin{align*}
                & \sum_{\tilde{k}_1,\tilde{k}_2}^{\infty}u_1^{\tilde{k}_1}u_2^{\tilde{k}_2}\sum_{k_1 = \tilde{k}_1}\sum_{k_2=\tilde{k}_2}P(k_1,k_2)\prod_{i=1}^{2}{k_i \choose \tilde{k}_i}T_i^{\tilde{k}_i}(1-T_i)^{k_i-\tilde{k}_i} \\
                &= \sum_{k_1,k_2}^{\infty}(1+(u_1-1)T_1)^{k_1}(1+(u_2-1)T_2)^{k_2}P(k_1,k_2) \\
                &= G\left(1+(u_1-1)T_1,1+(u_2-1)T_2\right)
               \end{align*}
               Similarly, $\tilde{F}(u_1,u_2) = F(1+(u_1-1)T_1,1+(u_2-1)T_2)$ 
               \par
 A component is a \emph{small} cluster of nodes that have received the information message. By small we mean that the cluster is finite and does not scale with the network size. However, at the phase transition, the average size of the cluster diverges (as $N \to \infty$). An information epidemic outbreak is possible only when the average size of the cluster diverges. In this regime the component is termed as a giant connected component (GCC) and it grows with the network size. Let $\tilde{H}_i(u_1,u_2)$ be the generating function of the distribution of the number of type 1 and type 2 nodes in a component arrived at from a type $i$ link. Let  $\tilde{J}_i(u_1,u_2)$ and $\tilde{J}(u_1,u_2)$ be the generating functions of the distribution of the number of type 1 and type 2 nodes in a component arrived at from node $i$ and a randomly chosen node, respectively. 
\par
 \begin{figure}
 \centering
  \includegraphics[width = 0.6\textwidth]{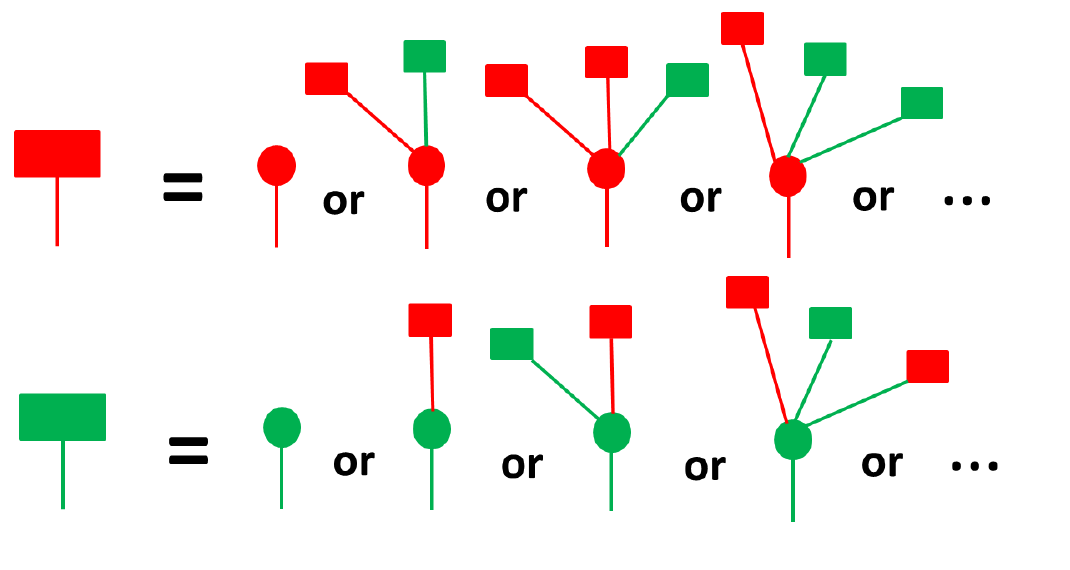}
 \caption{(Color Online) Illustration of components. The red boxes represent the components reached by type $1$ link while green boxes represent components reached by a type $2$ link. A type 2 node is represented by a green circle while a red circle represents a type 1 node. }
 \label{fig:percolation}
 \end{figure}
 Let random variable $Y_i$ be the number of type 1 and 2 nodes, that have received the message, in a component arrived at from a type $i$ link.
  The probability of encountering closed loops in finite cluster is $O(N^{-1})$ \cite{Newman2002} which can be neglected for large $N$. The tree like structure of the cluster allows us to write the size of the component encountered by traversing the link, as the sum of the size of components encountered after traversing the links emanating from the node at the end of the initial link. This is illustrated in Fig. \ref{fig:percolation}.  Hence, $Y_i$ can be written as:
 \begin{align*}
 Y_i = 1 + \tilde{K}_1Y_1 + \tilde{K}_2Y_2
 \end{align*}
 where random variable $\tilde{K}_i$ is the number of  type $i$ neighbors of the end node of type $i$ link that have received the message; the arrival link is not counted (excess degree). Since, the size of the components along different links are mutually independent (absence of loops) we can write the above equation in terms of probability generating functions. 
 \begin{align*}
 \tilde{H}_i(u_1,u_2) &= u_i\tilde{H}_1(u_1,u_2)^{\tilde{K}_1}\tilde{H}_2(u_1,u_2)^{\tilde{K}_2} \\ 
 &= u_i \sum_{\tilde{k}_1,\tilde{k}_2}^{\infty} \tilde{H}_1^{k_1}(u_1,u_2)\tilde{H}_2^{k_2}(u_1,u_2)\tilde{Q}(\tilde{k}_1,\tilde{k}_2) \\
 &= u_i\tilde{F}(\tilde{H}_1(u_1,u_2) ,\tilde{H}_2(u_1,u_2) ) \\
 \end{align*}
 Which can also be written as 
  \begin{align}
\tilde{H}_i(u_1,u_2) = u_iF\left(1+(\tilde{H}_1(u_1,u_2)-1)T_1,1+(\tilde{H}_2(u_1,u_2)-1)T_2 \right) \label{eqn:pgfCluster}
 \end{align}
  Similarly, $\tilde{J}(u_1,u_2)$ can be expressed as :
 \begin{align*}
 \tilde{J}_i(u_1,u_2) &= u_i\sum_{\tilde{k}_1,\tilde{k}_2}^{\infty} \tilde{H}_1^{k_1}(u_1,u_2)\tilde{H}_2^{k_2}(u_1,u_2)\tilde{P}(\tilde{k}_1,\tilde{k}_2) \\
 &= u_i\tilde{G}(\tilde{H}_1(u_1,u_2) ,\tilde{H}_2(u_1,u_2) ) \\
 \tilde{J}(u_1,u_2) &= (1-p)\tilde{J}_1(u_1,u_2) +p \tilde{J}_2(u_1,u_2)
 \end{align*}
 where $p$ is the probability of choosing a type 2 node, $p = \sum \limits_{k=1}^{\infty}P(k)\phi(k)$.
 The following theorem  describes the phase transition conditions required for an outbreak and the size of the such an outbreak.  The proof can be found in  \ref{appendix:theorem1}.
 
 \begin{thm}
 The condition required for a small cluster to become a giant connected component is given by: $\tilde{\nu}\geq 1 $, where 
 \begin{align*}
 \tilde{\nu} = T_1\sum_{k_1,k_2}^{\infty}k_1Q(k_1,k_2) + T_2\sum\limits_{k_1,k_2}^{\infty}k_2Q(k_1,k_2) 
 \end{align*}
  and the proportion of nodes in the giant connected component (size of GCC) is given by $1-\psi$,
 \begin{align*}
 \psi = \sum_{k_1,k_2}^{\infty}(1+(u^*-1)T_1)^{k_1}(1+(u^*-1)T_2)^{k_2}P(k_1,k_2)
 \end{align*}
 where $u^*$ is the solution of the fixed point equation
 \begin{align*}
 u =  \sum_{k_1,k_2}^{\infty}(1+(u-1)T_1)^{k_1}(1+(u-1)T_2)^{k_2}Q(k_1,k_2)
 \end{align*}
 \end{thm}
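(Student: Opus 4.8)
The plan is to extract the fixed point equation directly from the recursion \eqref{eqn:pgfCluster} by evaluating every generating function at $u_1 = u_2 = 1$, and then to read off both the size of the GCC and the phase transition condition from the resulting self-consistency relation. First I would set $u_1 = u_2 = 1$ in \eqref{eqn:pgfCluster} and write $a_i = \tilde{H}_i(1,1)$, the probability that the cluster reached by traversing a type $i$ link is finite. Since the prefactor $u_i$ becomes $1$, the right-hand side $F\!\left(1 + (a_1 - 1)T_1,\, 1 + (a_2 - 1)T_2\right)$ is \emph{identical} for $i = 1$ and $i = 2$, which forces $a_1 = a_2 =: u^*$. Hence $u^*$ satisfies $u^* = F\!\left(1 + (u^* - 1)T_1,\, 1 + (u^* - 1)T_2\right)$, and expanding $F$ through its definition $F(u_1,u_2) = \sum_{k_1,k_2} u_1^{k_1} u_2^{k_2} Q(k_1,k_2)$ reproduces the stated fixed point equation verbatim.

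Next I would transfer this to a randomly chosen node. Evaluating $\tilde{J}_i(u_1,u_2) = u_i\,\tilde{G}(\tilde{H}_1,\tilde{H}_2)$ at $u_1 = u_2 = 1$ gives $\tilde{J}_i(1,1) = \tilde{G}(u^*,u^*) = G\!\left(1 + (u^*-1)T_1,\, 1 + (u^*-1)T_2\right) = \psi$, which is independent of $i$. Therefore $\tilde{J}(1,1) = (1-p)\psi + p\psi = \psi$ is the probability that a random node lies in a finite component. Because a probability generating function sums only over finite clusters, any deficit of $\tilde{J}(1,1)$ below $1$ is exactly the probability of belonging to the (infinite) GCC; thus the GCC occupies a fraction $1 - \psi$ of the nodes, as claimed.

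For the threshold I would study the scalar map $g(u) = F\!\left(1 + (u-1)T_1,\, 1 + (u-1)T_2\right)$ on $[0,1]$. As a composition of probability generating functions it is increasing and convex, and $g(1) = F(1,1) = 1$, so $u = 1$ is always a fixed point (the subcritical case $\psi = 1$, no GCC). Convexity together with $g(1)=1$ guarantees that a second fixed point $u^* \in [0,1)$ — yielding a strictly positive GCC — exists if and only if $g$ crosses the diagonal from above at $u = 1$, i.e. $g'(1) > 1$. By the chain rule, $g'(1) = T_1\,\partial_{u_1}F(1,1) + T_2\,\partial_{u_2}F(1,1) = T_1 \sum_{k_1,k_2} k_1 Q(k_1,k_2) + T_2 \sum_{k_1,k_2} k_2 Q(k_1,k_2) = \tilde{\nu}$, so the emergence condition is precisely $\tilde{\nu} \ge 1$, with equality marking the critical point at which the mean finite-cluster size diverges.

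The main obstacle I anticipate is the rigor surrounding the generating-function argument at $u_1 = u_2 = 1$: one must justify that $\tilde{H}_i$ and $\tilde{J}$ generate only the finite clusters, so that evaluating at $1$ returns the finite-cluster probability rather than $1$, and that the physically relevant root of the fixed point equation is the \emph{smallest} one in $[0,1]$, reached by iterating $g$ upward from $0$. Pinning down the monotonicity and convexity of $g$, and confirming that the iteration converges to this smallest root, is what makes both the ``if and only if'' in the threshold condition and the identification of the GCC size $1-\psi$ watertight; the remaining work — expanding $F$ and $G$ and evaluating the first derivatives — is routine.
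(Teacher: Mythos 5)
Your proposal is correct, but it reaches the theorem by a genuinely different route than the paper. For the threshold, the paper computes the mean finite-cluster size: it differentiates $\tilde{J}$ at $\boldsymbol{u}=\boldsymbol{1}$, solves a $2\times 2$ linear system for $\tilde{H}_1'(1,1)$ and $\tilde{H}_2'(1,1)$, and arrives at $\langle s \rangle = 1 + \bigl(\langle \tilde{k}_1\rangle + \langle \tilde{k}_2\rangle\bigr)/\bigl(1 - T_1\bar{k}_1 - T_2\bar{k}_2\bigr)$, reading off the transition as the divergence of this quantity; you instead obtain the threshold as the condition $g'(1)>1$ for the scalar map $g(u)=F\left(1+(u-1)T_1,\,1+(u-1)T_2\right)$ to admit a fixed point below $1$, via monotonicity and convexity. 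For the GCC size, the paper introduces fresh unknowns $z_i$ (the probability that a type $i$ link fails to reach the GCC), writes their self-consistency equations from scratch, and reduces two equations to one variable through the identity $(z_1-1)/T_1 = (z_2-1)/T_2$; you get the same reduction for free by evaluating the recursion \eqref{eqn:pgfCluster} at $u_1=u_2=1$ and observing that the right-hand side is identical for $i=1,2$, forcing $\tilde{H}_1(1,1)=\tilde{H}_2(1,1)=u^*$. What each buys: your route unifies the threshold and the size statement in a single convexity argument and makes the equivalence ``GCC has positive fraction iff $\tilde{\nu}>1$'' sharper, since the paper's mean-size-divergence criterion is logically a different one that happens to coincide at the same threshold; the paper's route yields the explicit expression for $\langle s \rangle$ below threshold as a byproduct, a quantity of independent interest (the susceptibility). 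Both arguments share the same rigor caveats --- that the generating functions are defective above threshold and that the physically relevant root is the smallest one in $[0,1]$ --- which you flag explicitly while the paper leaves them implicit.
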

 \par
 The size of the information epidemic outbreak can now be used for formulating the optimization problem.
 \section{Problem Formulation \label{sec:Problem Formulation}}
 Providing incentives in the form of referral rewards for low degree nodes, or sponsorship offers for celebrities (high degree nodes) is costly. Since, the cost is a function of the degree let  $c(k)$ be the cost of providing incentivizing a node with degree $k$. The average cost, $\bar{c}(\boldsymbol{\phi})$, is given by $\sum\limits_{k=1}^{\infty}c(k)Pr($node is selected  $\mid$ node has degree $k)P(k) = \sum\limits_{k=1}^{\infty} c(k)\phi(k)P(k)$.  The proportion of type 2 individuals is given by $\sum\limits_{k=1}^{\infty}\phi(k)P(k)$.
  \par
 We formulate two optimization problems, viz., one which minimizes cost while enforcing a lower bound on the epidemic size, and the other which maximizes the epidemic size for a given cost budget. For both the problems, the evaluation of the size of the epidemic requires one to numerically solve a fixed point equation. Thus, there is no straightforward method to solve the optimization problem such as the Karush Kuhn Tucker (KKT) conditions, because evaluating the objective function requires one to solve a fixed point equation. We show that this problem can be reduced to a linear program, which can then be solved easily using any off the shelf LP solver.  
 \subsection{Cost minimization problem}
 Providing guarantees on the minimum number of individuals who will be informed about the campaign is appropriate for campaigns with large funding, such as election campaigns where message penetration is more important than the cost. The guarantee on epidemic size is written as a constraint to the optimization problem. The cost $\bar{c}(\boldsymbol{\phi})$ is minimized subject to $1 - \psi \geq \gamma$ where $\gamma \ \in \ [0,1]$ and $\boldsymbol{\phi}$ is the control variable.  If $\gamma = 0$, the constraint becomes $\tilde{\nu} \leq 1$, as $\gamma = 0$ implies $\psi = 1$ which is the same as $\tilde{\nu} \leq 1$. A finite amount of money, may put a constraint on the number of type 2 individuals. The proportion of type 2 individuals is given by $\sum_{k=1}^{\infty} \phi(k)P(k)$. This translates in to the constraint : $\sum_{k=1}^{\infty} \phi(k)P(k) \leq B$, where budget $B \ \in \ [0,1]$. 
 \par
     The following theorem which is our principle contribution allows us to solve a possible non convex problem by solving a linear program.  The key insight is that the probability of outbreak is monotonically decreasing in $q$, which then allows one to write the optimization problem as a linear program.  The intuition behind this claim is that since $q$ is the probability of finding a type 2 node on a randomly chosen link, increase in $q$ is equivalent to the increase in number of type 2 individuals resulting in a higher epidemic size. 
   \begin{thm} \label{theorem2}
   If $T_2 > T_1$, then $\psi \ \in \  (0,1)$, is strictly decreasing with respect to $q$, i.e, $\frac{d\psi}{dq} <0$ for all $q \ \in \  [0,1] $. For the $\psi =0$ case $(\tilde{\nu}\geq 1)$, $\tilde{\nu}$ is strictly increasing with respect to $q$, i.e, $\frac{d\tilde{\nu}}{dq} > 0, \ \forall \ \ q \ \in \  [0,1]$, where $q = \frac{1}{\langle k \rangle}\sum\limits_{k=1}^{\infty}k\phi(k)P(k)$.
   \end{thm}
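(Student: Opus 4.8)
\section*{Proof proposal}

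The plan is to exploit a structural simplification hidden in $Q(k_1,k_2)$ and $P(k_1,k_2)$: because each is obtained by binomially thinning a degree-$k$ node into its type-1 and type-2 neighbours with probabilities $1-q$ and $q$, the bivariate generating functions collapse to univariate ones. Writing $F_0(x)=\sum_k Q(k)x^k$ and $G_0(x)=\sum_k P(k)x^k$ for the excess-degree and degree generating functions, summing over $k=k_1+k_2$ via the binomial theorem gives $F(u_1,u_2)=F_0(u_1(1-q)+u_2 q)$ and $G(u_1,u_2)=G_0(u_1(1-q)+u_2 q)$. Substituting $u_1=1+(u-1)T_1$ and $u_2=1+(u-1)T_2$, the common argument becomes $1+(u-1)\bar T(q)$ with the \emph{effective transmissibility} $\bar T(q)=T_1(1-q)+T_2 q=T_1+(T_2-T_1)q$. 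Hence the fixed-point equation reduces to $u=F_0(1+(u-1)\bar T)$, the outbreak quantity to $\psi=G_0(1+(u^*-1)\bar T)$, and the threshold to $\tilde\nu=\bar T\,F_0'(1)$ (using $\sum_{k_1,k_2}k_1 Q(k_1,k_2)=(1-q)F_0'(1)$ and the analogous identity for $k_2$). The entire $q$-dependence funnels through the scalar $\bar T$, which is strictly increasing in $q$ since $T_2>T_1$.

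The claim about $\tilde\nu$ is then immediate: $\frac{d\tilde\nu}{dq}=F_0'(1)(T_2-T_1)>0$, because $F_0'(1)=\sum_k k Q(k)>0$. For $\psi$, the chain rule gives $\frac{d\psi}{dq}=(T_2-T_1)\frac{d\psi}{d\bar T}$, so it suffices to show $\frac{d\psi}{d\bar T}<0$; this reduces the two-type statement to the intuitive monotonicity of the single-type outbreak in its transmissibility. Writing $v^*=1+(u^*-1)\bar T$ so that $\psi=G_0(v^*)$, I differentiate $u^*=F_0(1+(u^*-1)\bar T)$ implicitly in $\bar T$, obtaining $\frac{du^*}{d\bar T}=\frac{F_0'(v^*)(u^*-1)}{1-\bar T F_0'(v^*)}$. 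Inserting this into $\frac{dv^*}{d\bar T}=(u^*-1)+\bar T\frac{du^*}{d\bar T}$, the algebra telescopes to $\frac{dv^*}{d\bar T}=\frac{u^*-1}{1-\bar T F_0'(v^*)}$. Since $G_0'(v^*)>0$, the sign of $\frac{d\psi}{d\bar T}=G_0'(v^*)\frac{dv^*}{d\bar T}$ equals that of $\frac{u^*-1}{1-\bar T F_0'(v^*)}$; as $u^*\in(0,1)$ makes the numerator negative, everything hinges on the sign of the denominator.

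Establishing $1-\bar T F_0'(v^*)>0$ is the crux, and I expect it to be the only delicate step. This is exactly the statement that the relevant fixed point $u^*$ is the \emph{stable} one, i.e.\ that the self-consistency map $\Phi(u)=F_0(1+(u-1)\bar T)$ crosses the diagonal from above at $u^*$. I will argue this from convexity: $F_0$ has non-negative Taylor coefficients, so $\Phi$ is convex and increasing on $[0,1]$ with $\Phi(1)=1$ and $\Phi(0)=F_0(1-\bar T)>0$. In the supercritical regime $\psi\in(0,1)$, the condition $\tilde\nu=\bar T F_0'(1)=\Phi'(1)>1$ forces $\Phi(u)<u$ just to the left of $1$, while $\Phi(0)>0$; convexity of $\Phi(u)-u$ then pins down a unique root $u^*\in(0,1)$ at which $\Phi(u)-u$ passes from positive to negative, so $\Phi'(u^*)-1<0$, i.e.\ $\bar T F_0'(v^*)<1$. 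With the denominator positive we conclude $\frac{d\psi}{d\bar T}<0$, hence $\frac{d\psi}{dq}<0$, completing the argument; the generating-function collapse and the implicit differentiation are routine once this convexity and phase-transition bookkeeping is settled.
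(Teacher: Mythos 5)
Your proof is correct, and it takes a genuinely different route from the paper's. Your key move --- collapsing the bivariate generating functions via the binomial theorem, $F(u_1,u_2)=F_0\bigl(u_1(1-q)+u_2q\bigr)$, so that all $q$-dependence funnels through the effective transmissibility $\bar T(q)=(1-q)T_1+qT_2$ --- does not appear in the paper. The paper instead manipulates the bivariate sums directly: its Lemma \ref{lemma1} is a combinatorial identity, proved by counting and pairing terms, asserting that $\partial_a\Sigma=\partial_b\Sigma$ for $\Sigma(a,b)=\sum f(k_1+k_2)\binom{k_1+k_2}{k_2}a^{k_2}b^{k_1}$; in your framework this is the trivial observation that $\Sigma$ depends on $(a,b)$ only through $a+b$. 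Likewise, the paper's Lemma \ref{lemma2} obtains $d\tilde\nu/dq=(T_2-T_1)a_2$ with $a_2=\sum_m mQ(m)>0$ via binomial second-moment computations, where you simply read off $\tilde\nu=\bar T F_0'(1)$ and differentiate in one line. The two arguments reconverge on the $\psi$ claim: the paper (Lemma \ref{lemma3}) also uses the implicit function theorem and also needs the slope condition $\partial f/\partial u<1$ at the relevant fixed point, which it proves by contradiction (a slope $\geq 1$ at the crossing, together with monotonicity and convexity, would preclude the second fixed point at $u=1$, contradicting $\psi>0$); your direct sign-crossing convexity argument for $\Phi'(u^*)<1$ is the same idea stated without the contradiction. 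What your route buys is the structural insight that the two-type model is exactly a one-type bond-percolation problem at transmissibility $\bar T(q)$, so Theorem \ref{theorem2} reduces to the classical monotonicity of outbreak size in a scalar transmissibility; this eliminates the combinatorial lemma and shortens the whole appendix. Both proofs lean on the same implicit regularity assumptions (strict supercriticality $\tilde\nu>1$ when $\psi\in(0,1)$, $F_0'(1)>0$, $\Phi(0)>0$, strict convexity of $F_0$), so yours concedes nothing in rigor.
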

   \begin{proof}
   The proof follows from Lemmas  \ref{lemma2} and \ref{lemma3} detailed in Appendix \ref{appendix:theorem2}.
   \end{proof}
     Since, $\frac{d\psi}{dq} < 0$, the epidemic size constraint can be written as $\frac{1}{\langle k \rangle}\sum\limits_{k=1}^{\infty}k\phi(k)P(k) \geq q^*$, where $ \psi(q) \mid_{q=q^*} \ = 1-\gamma$. 
   The optimization problem can now be written as follows:
   \begin{align}
      \begin{aligned}
      & \underset{\boldsymbol{\phi}}{\text{minimize}}
       \ \ \ \ \sum_{k=1}^{\infty} c(k)\phi(k)P(k)  \\
      & \text{subject to:} \ \  \\
      &		  \frac{1}{\langle k\rangle}\sum_{k=1}^{\infty}k\phi(k)P(k) \geq q^* \\
      &			\sum_{k=1}^{\infty} \phi(k)P(k) \leq B      \\
      &		\boldsymbol{0} \leq \boldsymbol{\phi} \leq \boldsymbol{1}  \label{eqn:LinOpt1}
        \end{aligned}
   \end{align}
   The above problem is a linear program which can be solved by any off-the-shelf LP solver.  
   \par
   The optimization problem described above may not be feasible for all values of $T_1$ and for all possible degree distributions. Assume, $B=1$, the problem  becomes infeasible if $1 - \psi \leq \gamma$ when $T_2$ is at the maximum possible value, i.e., all individuals are incentivized and yet $1 - \psi \leq \gamma$. 
   
   \subsection{Epidemic Size Maximization Problem}
   We now look at the problem of maximizing the information epidemic size (outreach) in a resource constrained scenario. More, specifically we study a scenario where the cost budget is finite. Thus the outbreak size $1-\psi$ must be maximized subject to a cost constraint. Since $\frac{d\psi}{dq} < 0$,  maximizing $q$ is equivalent to maximizing $1-\psi$. Thus the problem is equivalent to the following linear program. 
   \begin{align}
      \begin{aligned}
      & \underset{\boldsymbol{\phi}}{\text{maximize}}
       \ \ \ \ \sum_{k=1}^{\infty}k\phi(k)P(k) \\
      & \text{subject to:} \ \  \\
      &		  \sum_{k=1}^{\infty} c(k)\phi(k)P(k) \leq C \\
      &          \sum_{k=1}^{\infty} \phi(k)P(k) \leq B \\
      &		\boldsymbol{0} \leq \boldsymbol{\phi} \leq \boldsymbol{1}  \label{eqn:linOpt2}
        \end{aligned}
   \end{align}
   The linear program can now be solved using any standard linear programing solver. Note that constants $T_1, T_2$ do not play any role in problem (\ref{eqn:linOpt2}), while they do play a role in problem (\ref{eqn:LinOpt1}) because $q^*$ is a function of $T_1$ and $T_2$. 
   \section{Numerical Results \label{sec:Numerical Results}}
   As an illustration, we study the solution of the optimization problem for a linear cost, i.e., $c(k)=k$. The higher the degree, the higher the cost. Note that even if cost is non linear in $k$, the optimization problem remains a linear program. In the real world, the cost may be different, but whatever the cost function, the solution can be obtained by simply solving a linear program.
   \par
   We used an uncorrelated random graph generated using the configuration model technique with power law degree distribution ($P(k)\propto k^{- \alpha}$), $\alpha = 2.5$.
   \subsection{Cost Minimization Problem}
      \begin{figure}
      \centering
      
     \begin{subfigure}[b]{0.49\textwidth}
        \includegraphics[width = \textwidth]{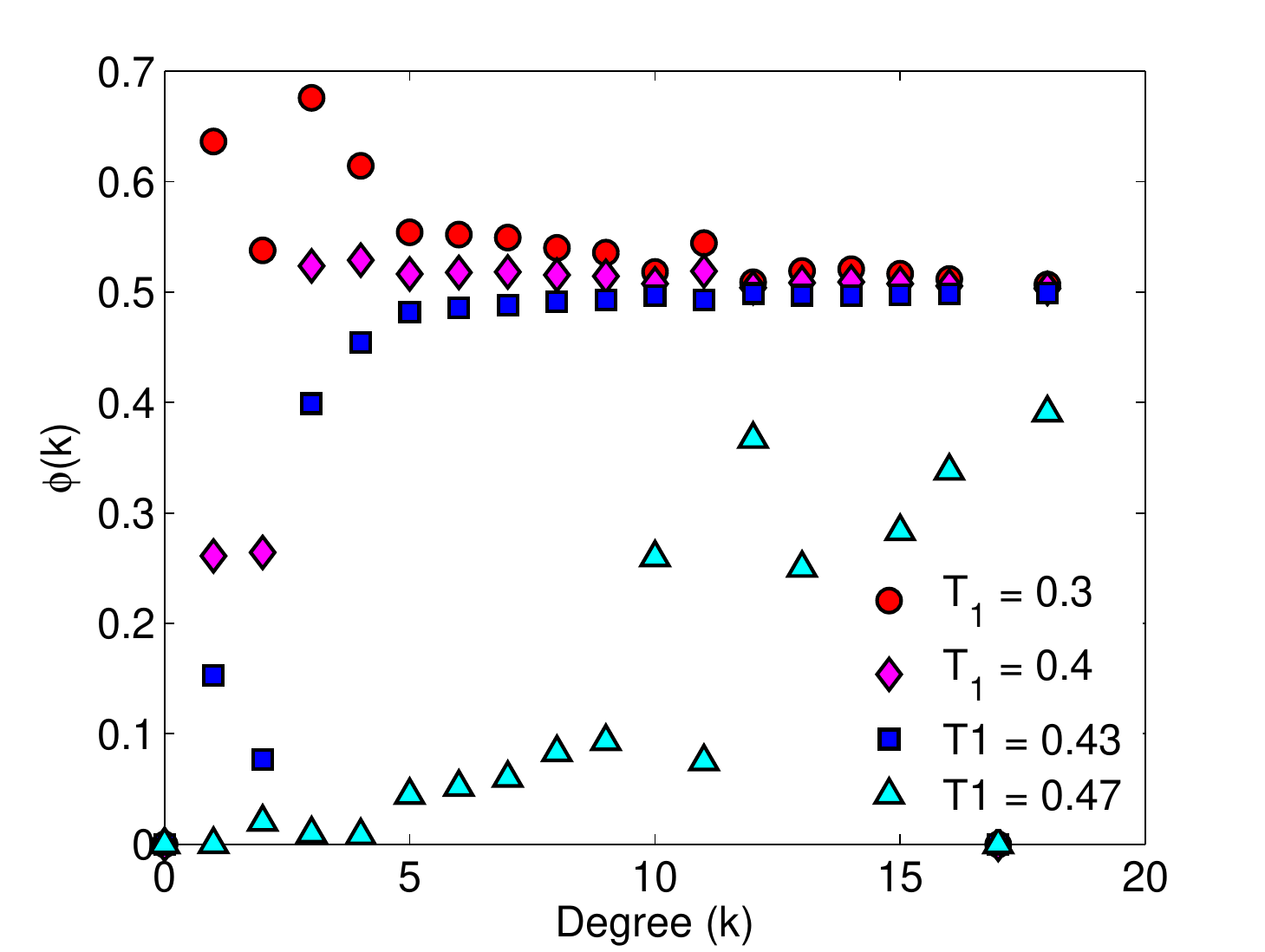}
       
                      \caption{}
       \label{fig:CostMinSoln}
       \end{subfigure}
              \begin{subfigure}[b]{0.49\textwidth}
               \includegraphics[width = \textwidth]{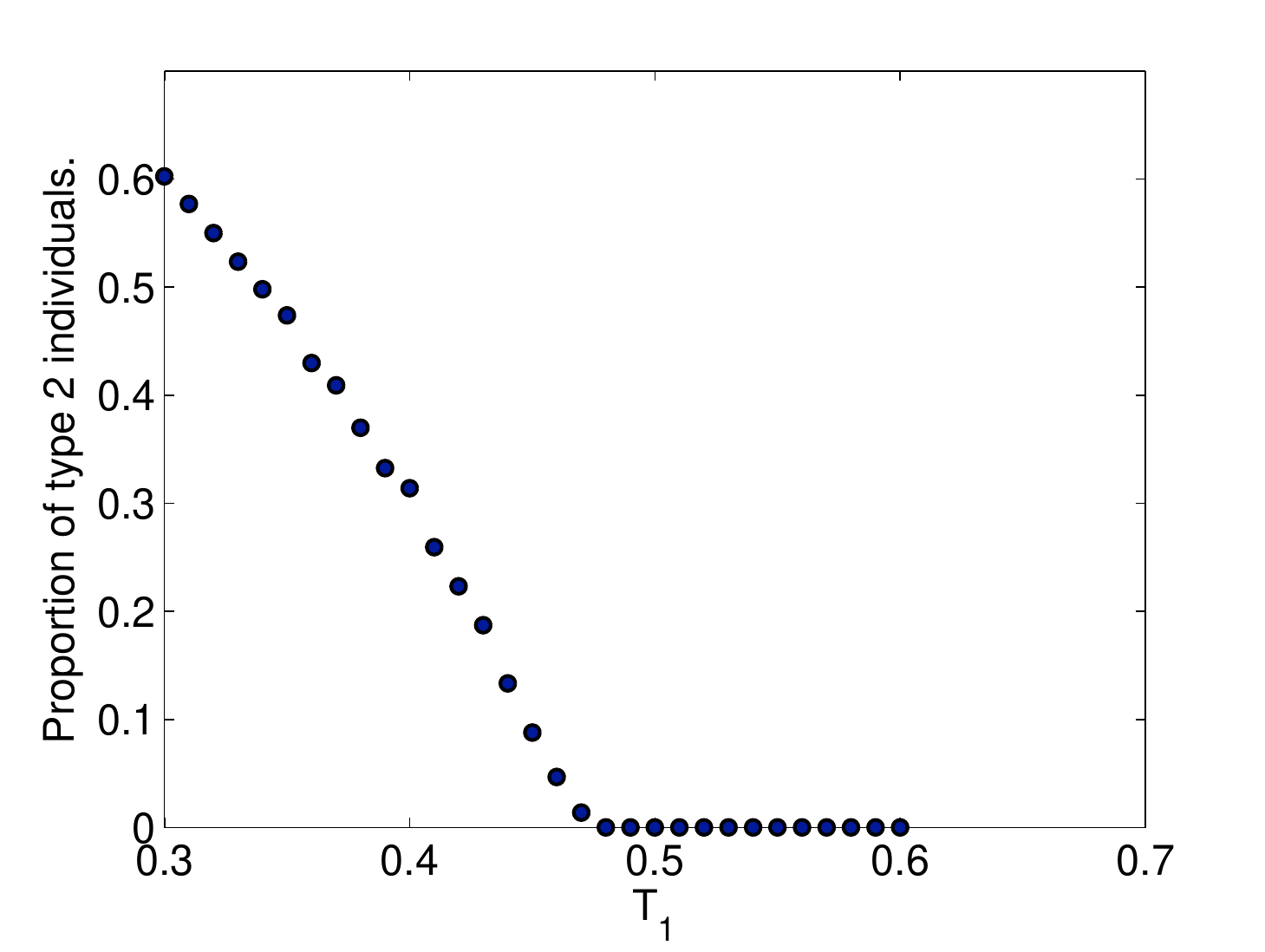}
               \caption{}           
              \label{fig:Type2ForT1}
             \end{subfigure}
         \caption{(Color Online) (a) Solution $\boldsymbol{\phi}$, for different values of $T_1$; (b) Optimal proportion of type 2 nodes required to meet outreach constraint. Parameters: $T_2 = 0.6,\ B = 0.7, \ \gamma = 0.2$.}
       \end{figure}
       
       We solved the cost minimization linear program using the `\emph{linprog}' MATLAB solver; $q^*$ was computed numerically using the bisection method. In Fig. \ref{fig:CostMinSoln}, we plot the solution $\boldsymbol{\phi}$  for different values of $T_1$.  The solution shows that only about $50 \%$ of high degree nodes need to be incetivized for $T_1$ values ranging from $0.3$ to $0.43$. As $T_1$ decreases from $0.47$ to $0.3$ , the proportion of high degree nodes that are incentivized remain fairly constant (50\%), while the proportion of incentivized low degree nodes increase. In Fig. \ref{fig:Type2ForT1}, we plot the optimal proportion of individuals that need to be incentivized for achieving the given outreach size.   
   \subsection{Epidemic Size Maximization Problem}
        \begin{figure}
         \centering
         
        \begin{subfigure}[b]{0.49\textwidth}
           \includegraphics[width = \textwidth]{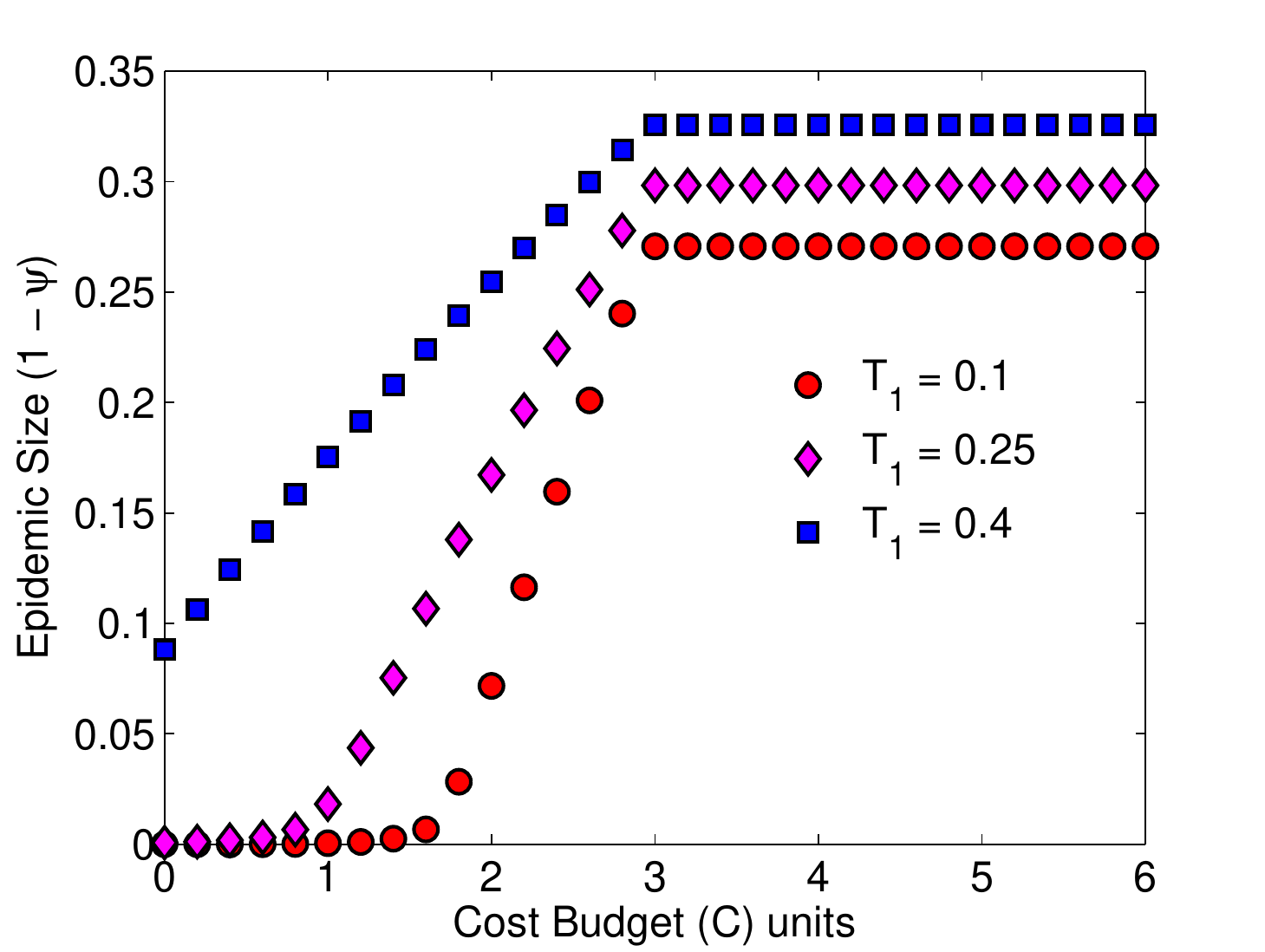}
          
                         \caption{}
          \label{fig:SizeVsCost}
          \end{subfigure}
                 \begin{subfigure}[b]{0.49\textwidth}
                  \includegraphics[width = \textwidth]{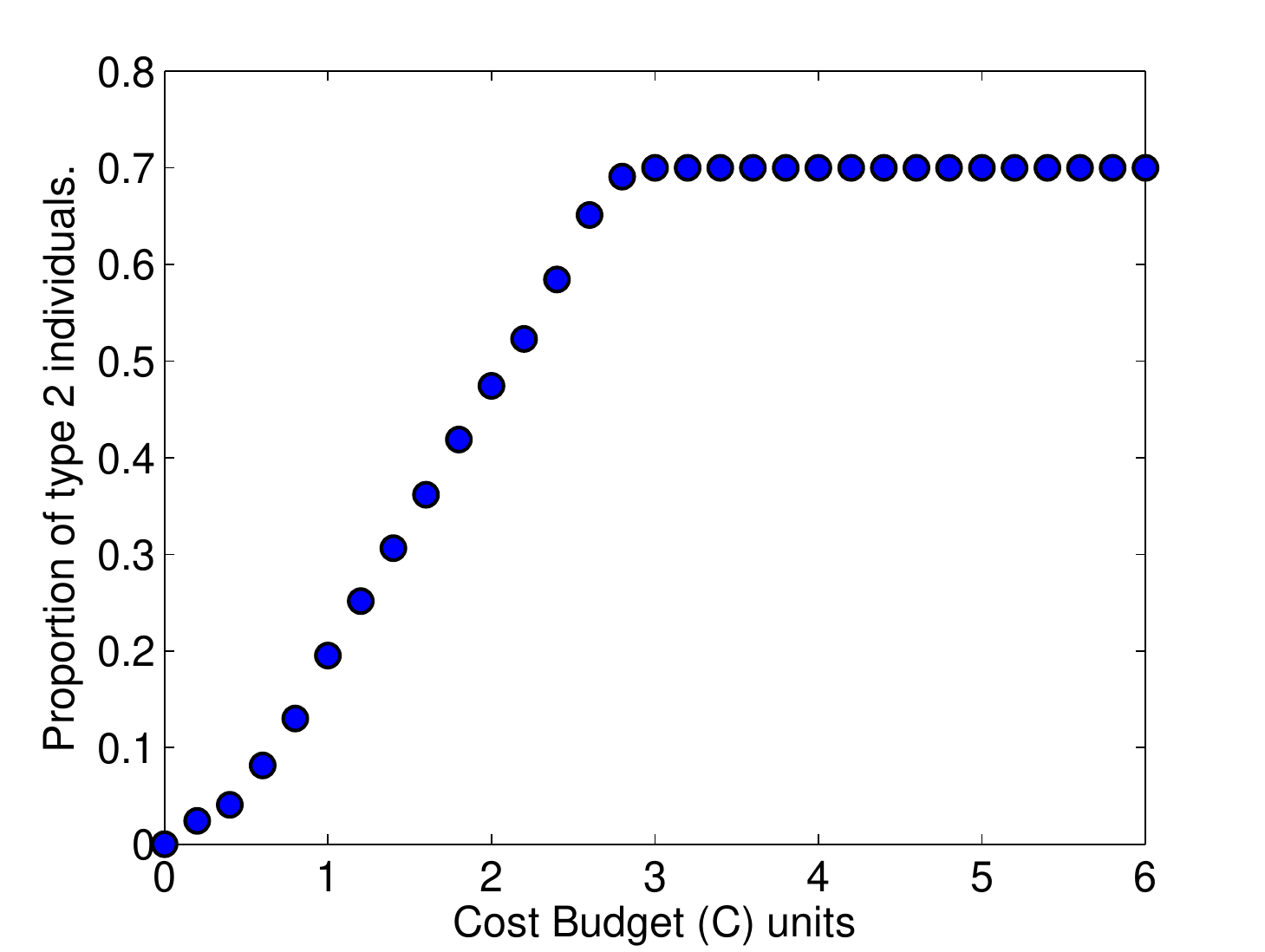}
                  \caption{}           
                 \label{fig:Type2ForCost}
                \end{subfigure}
            \caption{(Color Online) (a) Size of the information epidemic for varying cost budget (C). (b) Optimal proportion of type 2 nodes required to meet outreach constraint as a function of cost budget (C). Parameters: $T_2 = 0.6,\ B = 0.7$.}
          \end{figure}
          The solution,  $\boldsymbol{\phi}$, is very similar to the one in problem (\ref{eqn:LinOpt1}), and hence we do not show it here. In Fig.  \ref{fig:SizeVsCost}, we plot the size of the epidemic for varying cost budget $C$. As expected, the epidemic size increases with $C$ because higher the budget, the higher the proportion of incentivized individuals. However, at some point epidemic size saturates, this is because all nodes have been incentivized and therefore nothing more can be done to increase the outreach size. This is verified by Fig. \ref{fig:Type2ForCost}, the fraction of type 2 nodes hit $1$, when $C=3$.   
          
          \section{Conclusion and Future Work\label{sec:Conclusion}}
    To summarize, we studied the problem of maximizing information spreading in a social networks.  More specifically, we considered a scenario where individuals are incentivized to vigorously spread the campaign message to their neighbors, and we proposed a mechanism to identify the individuals who should be incetivized. Using bond percolation theory we calculated the size of the information epidemic outbreak and the conditions for the occurrence of such outbreaks. We then formulated an optimization problem for minimizing the expected cost of incentivizing individuals while providing guarantees on the information epidemic size. Although the optimization problem could not be addressed using standard analytical tools, Theorem \ref{theorem2} enabled us to compute the global optimum by solving a linear program. We believe that our approach of using percolation theory to formulate an optimization problem is the first of its kind.
                 \par
   { For the sake of analytical tractability we assumed an uncorrelated network, however in reality real world social networks have positive degree-degree correlations \cite{Newman2003b}. Such networks with positive degree associativity percolate more easily compared to uncorrelated networks \cite{Newman2002b, Noh2007}. Therefore, for the problem of minimizing cost the given campaign size could be achieved with a slightly lesser cost, while in the second problem, the theoretical optimal size would be a lower bound and the actual campaign size would be slightly larger than the theoretical. Apart from positive degree associativity, social networks are also found to contain community structures \cite{Newman2003b}. The presence of communities may slow down information spreading leading to a reduction in the campaign size. This may happen as most links point inside the community rather than outside it, thus localizing the information spread \cite{Wu2008}. However, if the network contains high degree nodes that bridge different communities, then incetivizing such nodes may substantially increase the campaign size. A similar finding was reported in \cite{Salathe2010}, where authors investigated usefulness of  targeted vaccinations on nodes that bridge communities.}
   \par
{ Although SIR models are widely used to model epidemics, they have some limitations. They fail to capture the fact that individuals may stop spreading when they perceive that most of their neighbors already known the information. This is captured by the  Maki-Thompson model \cite{Nekovee2007} which forces the recovery rate to be an increasing function of the number of informed individuals she contacts. Thus the recovery rate for an infected node is a function of her degree. An SIR process has a fixed recovery rate and hence the current results would approximately hold for an Maki Thompson process on Erdos-Renyi networks, where every node on average has the same degree. However, our results for SIR may not generalize for the Maki Thompson spread model on scale free networks. High degree nodes may have a higher chance of being connected to informed individuals which may lead them to stop spreading to other uninformed nodes. }
 \par
 { An interesting extension to this problem, which was suggested by the anonymous referee, is to compute a targeted incentivization strategy for two interacting campaigns. For example, the campaigner may want to maximize campaign $A$ given that campaign $B$, which has either run its course or is simultaneously running along with $A$, either reinforces or hinders campaign $A$. This is an important problem since such interacting campaigns are often observed during parliamentary or presidential elections. Although the current results may not shed much light on such questions, we believe that they lay the foundation for investigating such problems which we hope to address in the future. }
 \appendix
\section{Proof of Theorem 3.1 \label{appendix:theorem1}}
\begin{proof}
Let $\langle s_1 \rangle$ and $\langle s_2 \rangle$ be the average number of type 1 and type 2 nodes in the component. The expected number of nodes in the component, $<s>$, is given by:
\begin{align*}
\langle s \rangle &= \langle s_1 \rangle + \langle s_2 \rangle  \\
&= \frac{\partial}{\partial u_1}\tilde{J}(u_1,u_2) \biggr|_{\boldsymbol{u}=1} + \frac{\partial}{\partial u_2}\tilde{J}(u_1,u_2)  \biggr|_{\boldsymbol{u}=1}
\end{align*}
After differentiating and simplifying, $\langle s_1 \rangle$ can be written as:
\begin{align*}
\langle s_1 \rangle = (1-p) + \langle \tilde{k}_1 \rangle \tilde{H}_1^{'}(1,1) + \langle \tilde{k}_2 \rangle \tilde{H}_2^{'}(1,1) 
\end{align*}
where $\langle \tilde{k}_i \rangle  = \sum \limits_{\tilde{k}_1,\tilde{k}_2}^{\infty}\tilde{k}_i \tilde{P}(\tilde{k}_1,\tilde{k}_2)$  and  
\begin{align*}
\tilde{H}_i^{'}(1,1) = \frac{\partial}{\partial u_1} \tilde{H}_i(u_1,u_2) \biggr |_{\boldsymbol{u} = 1}
\end{align*}
$\tilde{H}_i^{'}(1,1)$ can be obtained by differentiating equation (\ref{eqn:pgfCluster}).
\begin{align*}
\tilde{H}_1^{'}(1,1)&= 1 + T_1 \bar{k}_1\tilde{H}_1^{'}(1,1) + T_2 \bar{k}_2\tilde{H}_2^{'}(1,1) \\
\tilde{H}_2^{'}(1,1)&= T_1 \bar{k}_1\tilde{H}_1^{'}(1,1) + T_2 \bar{k}_2\tilde{H}_2^{'}(1,1)
\end{align*}
where $\bar{k}_i = \sum \limits_{k_1,k_2}^{\infty}k_iQ(k_1,k_2)$. Solving the two simultaneous equations we obtain $\tilde{H}_1^{'}(1,1)= \frac{1-T_2\bar{k}_2}{1 - T_1\bar{k}_1 - T_2\bar{k}_2} $ and $\tilde{H}_2^{'}(1,1)= \frac{T_1\bar{k}_1}{1 - T_1\bar{k}_1 - T_2\bar{k}_2}$. Substituting in the expression for $\langle s_1 \rangle$ we get.
\begin{align*}
\langle s_1 \rangle = (1-p) + \frac{\langle\tilde{k}_1\rangle(1-T_2\bar{k}_2) + \langle\tilde{k}_2\rangle T_1\bar{k}_1 }{1 - T_1\bar{k}_1 - T_2\bar{k}_2}
\end{align*}
One can similarly show that:
\begin{align*}
\langle s_2 \rangle =  p + \frac{\langle \tilde{k}_1 \rangle T_2\bar{k}_2 + \langle \tilde{k}_2 \rangle(1-T_1\bar{k}_1) }{1 - T_1\bar{k}_1 - T_2\bar{k}_2}
\end{align*}
Therefore,
\begin{align*}
\langle s \rangle =  1 + \frac{\langle\tilde{k}_1\rangle + \langle \tilde{k}_2\rangle}{1 - T_1\bar{k}_1 -T_2 \bar{k}_2}
\end{align*}
Thus, when $T_1\bar{k}_1 + T_2\bar{k}_2 \geq 1$, $\langle s \rangle$ is no longer finite, it morphs into a giant connected component, or in other words there is an information epidemic outbreak. 
\par
Assume that a giant connected component of exists ($\tilde{\nu} \geq 1$). For any given node let $z_i$ be the probability that one of its type $i$ links does \emph{not} lead to the giant connected component. The probability that a randomly chosen node is not a part of the GCC  is given by
\begin{align*}
\psi &= \sum_{\tilde{k}_1,\tilde{k}_2}^{\infty}z_1^{k_1}z_2^{k_2}P(k_1,k_2) \\
\end{align*} 
Now, $z_i$ can be written as $Pr($link is not occupied $)$ + $Pr($link is occupied and the neighbor is not connected to the GCC$)$. By occupied we mean that the node at the end of the link is a believer. Mathematically this can be written as: 
\begin{align*}
z_1 &=  1 - T_1 + T_1\sum_{k_1,k_2}^{\infty}z_1^{k_1}z_2^{k_2}Q(k_1,k_2)  \\
z_2 &=  1 - T_2 + T_2\sum_{k_1,k_2}^{\infty}z_1^{k_1}z_2^{k_2}Q(k_1,k_2)
\end{align*}
Simplifying we obtain, $(z_1 -1)/T_1 = (z_2 -1)/T_2 $. Let $u := (z_1 -1)/T_1 + 1$. Hence, $z_i = 1 + (u-1)T_i$. Note that  $z_i$ is bounded from below by $1-T_i$ and bounded from above by $1$, and hence $0 \leq u \leq 1$. Substituting this in above equations we obtain the desired result:
\begin{align*}
\psi = \sum_{k_1,k_2}^{\infty}(1+(u-1)T_1)^{k_1}(1+(u-1)T_2)^{k_2}P(k_1,k_2)
\end{align*}
where $u$ must satisfy
\begin{align*}
u = \sum_{k_1,k_2}^{\infty}(1+(u-1)T_1)^{k_1}(1+(u-1)T_2)^{k_2}Q(k_1,k_2)
\end{align*}

\end{proof}

\section{Lemmas required for Theorem 4.1 \label{appendix:theorem2}}
Lemma \ref{lemma1} is used in the proof of Lemma \ref{lemma2} and \ref{lemma3}.
 \begin{lem} \label{lemma1}
 For all $a ,\ b \ \in \ [0,1] $ and $ k_1 + k_2 \leq n, \ n \ \in \ \boldsymbol{Z}^+$ and any arbitrary $f:\boldsymbol{Z}\rightarrow\boldsymbol{R}$ the following is true:
  \begin{align*}
 &\sum_{k_1=0}^{n}\sum_{k_2=0}^{n-k_1}f(k_1+k_2)k_2 {k_1+ k_2 \choose k_2}a^{k_2-1}b^{k_1} \\ -& \sum_{k_1=0}^{n}\sum_{k_2=0}^{n-k_1}f(k_1+k_2)k_1 {k_1 + k_2 \choose k_2}a^{k_2}b^{k_1-1} = 0
 \end{align*}
  \end{lem}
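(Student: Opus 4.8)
The plan is to recognize the two double sums as the two first-order partial derivatives of a single generating polynomial, and then to exploit a symmetry of that polynomial. Define
\[
\Phi(a,b) = \sum_{k_1=0}^{n}\sum_{k_2=0}^{n-k_1} f(k_1+k_2)\binom{k_1+k_2}{k_2} a^{k_2} b^{k_1}.
\]
Differentiating term by term, $\partial\Phi/\partial a$ produces the factor $k_2 a^{k_2-1}$ and reproduces exactly the first sum in the statement, while $\partial\Phi/\partial b$ produces $k_1 b^{k_1-1}$ and reproduces the second sum. Thus the asserted identity is precisely the statement that $\partial\Phi/\partial a = \partial\Phi/\partial b$ on $[0,1]^2$ (indeed everywhere, since $\Phi$ is a polynomial).

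First I would collapse $\Phi$ itself. Reorganizing the summation by the total $m=k_1+k_2$, which ranges over $0,\dots,n$, the inner sum over $k_2=0,\dots,m$ (with $k_1=m-k_2$) is a binomial expansion:
\[
\Phi(a,b) = \sum_{m=0}^{n} f(m)\sum_{k_2=0}^{m}\binom{m}{k_2} a^{k_2} b^{m-k_2} = \sum_{m=0}^{n} f(m)\,(a+b)^m.
\]
Hence $\Phi$ depends on $a$ and $b$ only through the combination $a+b$. By the chain rule both partial derivatives equal $\sum_{m=1}^{n} f(m)\,m\,(a+b)^{m-1}$, so their difference vanishes identically, which is the claim.

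If a purely algebraic argument avoiding derivatives is preferred, the same computation can be carried out directly on each sum using the absorption identity $k\binom{m}{k}=m\binom{m-1}{k-1}$: in the first sum the factor $k_2$ kills the $k_2=0$ term and turns $k_2\binom{m}{k_2}$ into $m\binom{m-1}{k_2-1}$, after which the shift $j=k_2-1$ and the binomial theorem give $\sum_m f(m)\,m\,(a+b)^{m-1}$; the second sum yields the same expression by the symmetric shift $i=k_1-1$. The only point requiring care — the sole ``obstacle'' — is the bookkeeping at the boundary of the triangular index set: one must check that the terms with $k_2=0$ (resp.\ $k_1=0$) contribute nothing, which is automatic because of the multiplicative factors $k_2$ and $k_1$, and that the shift preserves $k_1=m-k_2\ge 0$ throughout. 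No genuine analytic difficulty arises, and the restriction $a,b\in[0,1]$ plays no role in the identity, being inherited only from the setting in which the lemma will later be applied.
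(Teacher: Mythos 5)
Your proof is correct, but it takes a genuinely different route from the paper's. You observe that the two double sums are exactly $\partial\Phi/\partial a$ and $\partial\Phi/\partial b$ for the polynomial $\Phi(a,b)=\sum_{k_1+k_2\le n} f(k_1+k_2)\binom{k_1+k_2}{k_2}a^{k_2}b^{k_1}$, then collapse $\Phi$ along lines of constant total degree $m=k_1+k_2$ via the binomial theorem to $\Phi(a,b)=\sum_{m=0}^{n} f(m)(a+b)^m$, so that $\Phi$ depends on $(a,b)$ only through $a+b$ and the two partials coincide. The paper instead manipulates the difference directly: it swaps the roles of $k_1$ and $k_2$ in the second sum (exploiting the symmetry of the binomial coefficient and of the triangular index set), writes the difference as a single sum of terms $g(k_1,k_2)=f(k_1+k_2)\,k_2\binom{k_1+k_2}{k_2}\left(a^{k_2-1}b^{k_1}-a^{k_1}b^{k_2-1}\right)$, discards the terms that vanish individually (those with $k_2=0$ and those with $k_2=k_1+1$), and cancels the remainder in pairs under the correspondence $(k_1,k_2)\mapsto(k_2-1,\,k_1+1)$, supported by an explicit count showing the number of remaining terms is even. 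Your argument buys brevity and extra information: it exhibits the common value of both sums, namely $\sum_{m=1}^{n} m\,f(m)(a+b)^{m-1}$, it makes transparent that this is a polynomial identity valid for all real $a,b$ (the restriction to $[0,1]$ being irrelevant, as you note), and it replaces the paper's boundary and parity bookkeeping by a one-line chain-rule observation; your absorption-identity variant even delivers the same conclusion without calculus. What the paper's pairing argument buys is that it stays entirely within elementary binomial-coefficient manipulation on the given sums, at the cost of a longer proof whose validity rests on checking that the pairing is a well-defined involution on the surviving index pairs --- a verification your approach avoids altogether.
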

 \begin{proof}
 We can switch the indices in the second term, i.e.,
 \begin{align*}
 &\sum_{k_1=0}^{n}\sum_{k_2=0}^{n-k_1}f(k_1+k_2)k_1 {k_1 + k_2 \choose k_2}a^{k_2}b^{k_1-1}  \\
 &=\sum_{k_1=0}^{n}\sum_{k_2=0}^{n-k_1}f(k_1+k_2)k_2 {k_1 + k_2 \choose k_2}a^{k_1}b^{k_2-1}
 \end{align*}
 Hence,
  \begin{align}
  LHS = &\sum_{k_1=0}^{n}\sum_{k_2=0}^{n-k_1}  f(k_1+k_2)k_2 {k_1+ k_2 \choose k_2}\bigg{(}a^{k_2-1}b^{k_1} - a^{k_1}b^{k_2-1} \bigg{)} \nonumber \\ 
  & = \sum_{k_1=0}^{n}\sum_{k_2=0}^{n-k_1} g(k_1,k_2) \label{eqn:lemma1} 
  \end{align}
 We now count the number of terms in the above equation and show that they are even.  An expression indexed by a specific $k_1$ and $k_2$ denotes a term, e.g, $g(1,1)$ is a term. The total number of terms in the summation $= \sum\limits_{i=1}^{n+1}i = \frac{(n+1)(n+2)}{2}$. Out of those,  $n+1$ terms are $0$ due to the $k_2$ multiplier ($k_2 = 0$ for $k_1 = 0 \ \text{to} \ n$). Additionally, when $k_2 = k_1+1$  equation (\ref{eqn:lemma1}) is zero. The total number of terms when $k_2 = k_1 +1 $ is given by $\floor*{\frac{n+1}{2}}$.
 \par
 Since, these terms are zero, subtracting out these terms from the total number of terms results in 
 \begin{align*}
 &\frac{(n+1)(n+2)}{2} - (n+1) - \floor*{\frac{n+1}{2}} \\
 &= \frac{n^2}{2} \ \text{for }n \text{ even}\\
 & = \frac{(n-1)(n+1)}{2} \ \text{for }n \text{ odd}
 \end{align*}
 
Thus, the remaining terms are even for both $n$ odd and even. This allows us to pair the terms.  Consider one such pairing: the term with indices $k_1, \ k_2$ are paired with a term with indices $\hat{k}_1, \ \hat{k}_2$ where $\hat{k}_2 = k_1 + 1$ and $\hat{k}_1 = k_2 -1$.  If we sum these two terms we obtain

 \begin{align*}
 & g(k_1,k_2) + g(\hat{k}_1,\hat{k}_2) \\
 & = f(k_1+k_2)a^{k_2-1}b^{k_1}\left(  \frac{k_2(k_1+k_2)!}{k_1!k_2!} - \frac{k_2(k_1+k_2)!}{k_1!k_2!}  \right)\\&+  f(k_1+k_2)a^{k_1}b^{k_2-1}\left( \frac{(k_1+1)(k_1+k_2)!}{(k_2-1)!(k_1+1)!} - \frac{(k_1+1)(k_1+k_2)!}{(k_2-1)!(k_1+1)!} \right)  \\
 &=0
 \end{align*}
 
  Thus, the summation of the remaining terms is zero, which completes the proof. 
  \end{proof}

    \begin{lem}\label{lemma2}
    If $T_2 >T_1$ then $\tilde{\nu}$ is strictly increasing with respect to $q$, i.e, $\frac{d\tilde{\nu}}{dq} > 0, \ \forall \ q \ \in \ [0,1]$.
    \end{lem}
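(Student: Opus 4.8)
The plan is to differentiate $\tilde{\nu}=\sum_{k_1,k_2}(T_1k_1+T_2k_2)Q(k_1,k_2)$ term-by-term in $q$, where $Q(k_1,k_2)=\binom{k_1+k_2}{k_2}q^{k_2}(1-q)^{k_1}Q(k_1+k_2)$ and all sums are finite because degrees are bounded by the large-but-finite network size $N$. The only $q$-dependence sits in the factor $q^{k_2}(1-q)^{k_1}$, whose derivative is $k_2q^{k_2-1}(1-q)^{k_1}-k_1q^{k_2}(1-q)^{k_1-1}$, so $\frac{d\tilde{\nu}}{dq}$ becomes a double sum of exactly the shape appearing in Lemma~\ref{lemma1}.

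The key idea is to split the degree weight as $T_1k_1+T_2k_2=T_1(k_1+k_2)+(T_2-T_1)k_2$. The first piece, $T_1(k_1+k_2)Q(k_1+k_2)$, is a function of the total degree $m=k_1+k_2$ alone, so feeding $f(m)=T_1mQ(m)$ together with $a=q$, $b=1-q$ into Lemma~\ref{lemma1} shows that its contribution to $\frac{d\tilde{\nu}}{dq}$ is precisely the left-hand side of that identity, hence vanishes. Intuitively this encodes the fact that the expectation of any function of the total excess degree does not depend on how the neighbors split into the two types, i.e. is independent of $q$.

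What survives is $\frac{d\tilde{\nu}}{dq}=(T_2-T_1)\frac{d}{dq}\sum_{k_1,k_2}k_2Q(k_1,k_2)=(T_2-T_1)\frac{d\bar{k}_2}{dq}$, and to finish I would evaluate $\bar{k}_2$ in closed form. Conditioned on total degree $m$, the number of type-2 stubs $k_2$ is Binomial$(m,q)$, so summing the binomial mean gives $\bar{k}_2=q\langle k\rangle_Q$, where $\langle k\rangle_Q=\sum_m mQ(m)>0$ is the mean of the excess degree distribution. Therefore $\frac{d\bar{k}_2}{dq}=\langle k\rangle_Q>0$, and since $T_2>T_1$ by hypothesis I conclude $\frac{d\tilde{\nu}}{dq}=(T_2-T_1)\langle k\rangle_Q>0$ for every $q\in[0,1]$, which is the claim.

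The main obstacle is the bookkeeping in the differentiation step: the two terms coming from $\frac{d}{dq}\big(q^{k_2}(1-q)^{k_1}\big)$ each mix the $k_1$ and $k_2$ weights, and they do not visibly cancel. Lemma~\ref{lemma1} is exactly the combinatorial identity that kills the total-degree part, so the genuine work is recognizing that the $T_1$ contribution depends only on $k_1+k_2$ and can be routed through Lemma~\ref{lemma1}; after that the remainder is a one-line binomial mean computation. Equivalently, one could sidestep Lemma~\ref{lemma1} altogether by first writing $\tilde{\nu}=\langle k\rangle_Q\big[T_1(1-q)+T_2q\big]$ via the same conditional-mean identity and differentiating directly, but the Lemma~\ref{lemma1} route is the one consistent with the paper's development.
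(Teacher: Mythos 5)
Your proof is correct, and its main line is essentially the paper's argument with cleaner bookkeeping. The paper differentiates first, defines $a_1$ and $a_2$ as the $k_1$- and $k_2$-weighted derivative sums, invokes Lemma~\ref{lemma1} to get $a_1+a_2=0$, and then must evaluate $a_2$ explicitly, which requires the second moment of a Binomial$(m,q)$ variable; your split $T_1k_1+T_2k_2=T_1(k_1+k_2)+(T_2-T_1)k_2$ effects the same cancellation up front (your application of Lemma~\ref{lemma1} with $f(m)=T_1mQ(m)$ is exactly the paper's identity $a_1+a_2=0$ scaled by $T_1$), and because you sum \emph{before} differentiating, the surviving term needs only the binomial mean $\bar{k}_2=q\langle k\rangle_Q$ rather than a variance computation. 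Both arguments rest on the same tacit nondegeneracy assumption $\langle k\rangle_Q=\sum_m mQ(m)>0$, which the paper uses when asserting $a_2=\sum_m mQ(m)>0$. Your closing observation is worth emphasizing: the conditional-binomial identity yields the closed form $\tilde{\nu}=\langle k\rangle_Q\left[T_1(1-q)+T_2q\right]$, from which $\frac{d\tilde{\nu}}{dq}=(T_2-T_1)\langle k\rangle_Q>0$ is immediate and Lemma~\ref{lemma1} is not needed at all; that is a genuinely shorter route than the paper's, and the only thing lost by taking it is consistency with the machinery the paper reuses in Lemma~\ref{lemma3}.
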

    \begin{proof}
   
    \begin{align*}
  \frac{d\tilde{\nu}}{dq} =  & T_1\sum_{k_1,k_2}^{\infty}k_1Q(k_1+k_2){k_1+k_2 \choose k_2}\left(k_2q^{k_2-1}r^{k_1} - k_1q^{k_2}r^{k_1-1}\right) \\
    &+ T_2\sum_{k_1,k_2}^{\infty}k_2Q(k_1+k_2){k_1+k_2 \choose k_2}\left(k_2q^{k_2-1}r^{k_1} - k_1q^{k_2}r^{k_1-1}\right)
    \end{align*}
    
    where $r = 1-q$. Let,

    \begin{align*}
    a_1 &= \sum_{k_1,k_2}^{\infty}k_1Q(k_1+k_2){k_1+k_2 \choose k_2}\left(k_2q^{k_2-1}r^{k_1} - k_1q^{k_2}r^{k_1-1}\right) \\ 
    a_2 &= \sum_{k_1,k_2}^{\infty}k_2Q(k_1+k_2){k_1+k_2 \choose k_2}\left(k_2q^{k_2-1}r^{k_1} - k_1q^{k_2}r^{k_1-1}\right)
    \end{align*}
    
    Adding $a_1$ and $a_2$ we get

    \begin{align*}
     a_1 + a_2 =  \sum_{k_1,k_2}^{\infty}(k_1+k_2)Q(k_1+k_2){k_1+k_2 \choose k_2}\left(k_2q^{k_2-1}r^{k_1} - k_1q^{k_2}r^{k_1-1}\right)
    \end{align*}
    
     Since $N$ is large but finite, $Q(k_1+k_2)=0$ for $k_1+k_2 > k_{max}$, where $k_{max}$ is the maximum degree. From Lemma \ref{lemma1}, $a_1 + a_2 = 0$. Now we prove that $a_2 > 0$. Let $k_1 + k_2 = m$. 
      \begin{align*}
    a_2 = \sum_{m=1}^{k_{max}}Q(m)\bigg[ \frac{1}{q}\sum_{k_2=0}^{m}k_2^2{m \choose k_2}q^{k_2}r^{m-k_2}  -\frac{1}{r}\sum_{k_1=0}^{m}k_1(m-k_1){m \choose k_1}q^{m-k_1}r^{k_1}\bigg ]
    \end{align*}
    The summations are the second moments of a binomial random variable. $E \left[X^2 \right] = Var\left[X\right] + E \left[X \right]^2$, $E[X] = mq, \ Var[X] = mqr$.
   
    \begin{align*}
    a_2 &= \sum_{m=1}^{k_{max}}Q(m)\bigg[\frac{1}{q}(mqr + m^2q^2)  -\frac{1}{r}(m^2r - mqr-m^2r^2)\bigg ] \\
    &= \sum_{m=1}^{k_{max}}Q(m)m \\
    &>0
    \end{align*}
    
      Since $T_2>T_1$, $T_1a_1 + T_2a_2 > 0$, which completes the proof.
    \end{proof}
    
    \begin{lem} \label{lemma3}
    For $\psi \ \in \  (0,1)$, if $T_2 >T_1$ then $\psi$ is strictly decreasing with respect to $q$, i.e, $\frac{d\psi}{dq} <0,  \ \forall \ q \ \in \ [0,1]$. 
    \end{lem}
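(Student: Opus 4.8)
The plan is to collapse the bivariate generating functions into univariate ones, so that the entire dependence on $q$ is funnelled through a single effective occupation probability. Starting from $P(k_1,k_2)=\binom{k_1+k_2}{k_2}q^{k_2}(1-q)^{k_1}P(k_1+k_2)$ and the binomial theorem, I would write $G(u_1,u_2)=\mathcal{G}\big(u_1(1-q)+u_2q\big)$ and, identically, $F(u_1,u_2)=\mathcal{F}\big(u_1(1-q)+u_2q\big)$, where $\mathcal{G}(x)=\sum_k P(k)x^k$ and $\mathcal{F}(x)=\sum_k Q(k)x^k$ are the ordinary generating functions of the degree and excess-degree distributions. Substituting the percolation weights $u_i=1+(u-1)T_i$ then merges the two arguments, since $u_1(1-q)+u_2q=1+(u-1)\bar T$ with $\bar T:=T_1(1-q)+T_2q$. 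Hence $\psi=\mathcal{G}(w)$ and the fixed-point equation for $u^{\ast}$ becomes $u^{\ast}=\mathcal{F}(w)$, where $w:=1+(u^{\ast}-1)\bar T$. This reduction is the step that makes $\psi$ tractable: unlike $\tilde{\nu}$ in Lemma \ref{lemma2}, the explicit $q$-derivative of $\psi$ carries weights $(1+(u^{\ast}-1)T_1)^{k_1}(1+(u^{\ast}-1)T_2)^{k_2}$ that depend on $k_1$ and $k_2$ separately rather than through $k_1+k_2$, so Lemma \ref{lemma1} cannot be applied directly and the collapse to one variable is essential.

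Because $\bar T=T_1+(T_2-T_1)q$ gives $d\bar T/dq=T_2-T_1>0$ under the hypothesis $T_2>T_1$, the chain rule $d\psi/dq=(d\psi/d\bar T)(T_2-T_1)$ reduces the claim to showing $d\psi/d\bar T<0$. I would obtain this by differentiating the fixed-point identity $u^{\ast}=\mathcal{F}\big(1+(u^{\ast}-1)\bar T\big)$ implicitly with respect to $\bar T$. Using $\tfrac{dw}{d\bar T}=\bar T\,\tfrac{du^{\ast}}{d\bar T}+(u^{\ast}-1)$ together with $\tfrac{du^{\ast}}{d\bar T}=\mathcal{F}'(w)\,\tfrac{dw}{d\bar T}$ and eliminating $du^{\ast}/d\bar T$ yields
\[
\frac{dw}{d\bar T}=\frac{u^{\ast}-1}{1-\bar T\,\mathcal{F}'(w)},\qquad \frac{d\psi}{d\bar T}=\mathcal{G}'(w)\,\frac{u^{\ast}-1}{1-\bar T\,\mathcal{F}'(w)} .
\]
It then remains only to fix the sign of each factor.

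The sign analysis is where the real work lies. Two factors are immediate: $\mathcal{G}'(w)=\sum_k kP(k)w^{k-1}>0$ since $w=1-(1-u^{\ast})\bar T\in(0,1]$, and $u^{\ast}-1<0$ because $u^{\ast}=1$ would force $w=1$ and $\psi=\mathcal{G}(1)=1$, contradicting $\psi\in(0,1)$. The main obstacle is the denominator: I must show $1-\bar T\,\mathcal{F}'(w)>0$ at the physically selected fixed point. For this I would invoke the convexity of the map $\Phi(u):=\mathcal{F}\big(1+(u-1)\bar T\big)$, which is increasing and convex with $\Phi(1)=1$ and $\Phi'(1)=\bar T\,\mathcal{F}'(1)=\tilde{\nu}>1$ in the supercritical regime $\psi\in(0,1)$. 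Since $u^{\ast}$ is the smallest root of $\Phi(u)=u$ in $[0,1)$, a convex map meeting the diagonal at $u=1$ with slope exceeding $1$ must cross the diagonal from above at $u^{\ast}$, forcing $\Phi'(u^{\ast})=\bar T\,\mathcal{F}'(w)<1$ and hence the required positivity of the denominator. Combining the three signs gives $d\psi/d\bar T<0$, and the chain rule then delivers $d\psi/dq<0$, completing the argument.
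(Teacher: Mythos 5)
Your proof is correct, but it takes a genuinely different route from the paper's. The paper works directly with the bivariate functions $f(u,q)$ and $g(u,q)$: it differentiates them in $q$, invokes the combinatorial identity of Lemma \ref{lemma1} with rescaled arguments $a=\beta q$, $b=\alpha r$ (so, contrary to your aside, that lemma \emph{does} apply to $\psi$ once $\alpha$ and $\beta$ are absorbed into the arguments) to conclude $\partial f/\partial q<0$ and $\partial g/\partial q<0$, then uses the implicit function theorem together with a convexity/contradiction argument to get $du^*/dq<0$, and finally sums the two negative contributions in the total derivative $\frac{d\psi}{dq}=\frac{\partial g}{\partial q}+\frac{\partial g}{\partial u}\frac{du^*}{dq}$. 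Your collapse $G(u_1,u_2)=\mathcal{G}\left(u_1(1-q)+u_2q\right)$, $F(u_1,u_2)=\mathcal{F}\left(u_1(1-q)+u_2q\right)$ is valid by the binomial theorem, and funnelling all $q$-dependence through the effective transmissibility $\bar T=(1-q)T_1+qT_2$ is exactly what makes the monotonicity transparent: the two-type system is equivalent to single-type percolation with a link-averaged transmissibility. Your sign analysis of $1-\bar T\,\mathcal{F}'(w)$ is in substance the same convexity argument the paper uses to show $\partial f/\partial u<1$ at the fixed point, and your identification $\bar T\,\mathcal{F}'(1)=\tilde{\nu}$ is correct. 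What your route buys: it eliminates Lemma \ref{lemma1} entirely, and it makes Lemma \ref{lemma2} a one-line corollary, since $\tilde{\nu}=\bar T\,\mathcal{F}'(1)$ gives $d\tilde{\nu}/dq=(T_2-T_1)\mathcal{F}'(1)>0$. What the paper's route buys: it stays within the bivariate formalism already set up in the analysis section and requires no structural observation about the model. Two quibbles that do not affect correctness: $w=1-(1-u^*)\bar T$ lies in $[0,1)$ rather than $(0,1]$ (it is strictly positive because $T_i=\beta_i/(\beta_i+\mu_i)<1$), and your premise $\tilde{\nu}>1$ under $\psi\in(0,1)$ deserves its one-line justification, namely that if $\tilde{\nu}\le 1$ then convexity of $\Phi$ forces $u=1$ to be the only fixed point in $[0,1]$, contradicting $\psi<1$.
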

    \begin{proof}
    Let, $\psi = g(u^*,q)$ where $u^*$ is the solution of the fixed point equation $u = f(u,q)$.
    
    \begin{align*}
    g(u^*,q) &= \sum_{k_1,k_2}^{\infty}\alpha^{k_1}\beta^{k_2}P(k_1+k_2)  {k_1+k_2 \choose k_2}q^{k2}(1-q)^{k_1} \\
    f(u,q)&=\sum_{k_1,k_2}^{\infty}\alpha^{k_1}\beta^{k_2}Q(k_1+k_2){k_1+k_2 \choose k_2}q^{k2}(1-q)^{k_1}
    \end{align*}
         where $\alpha = 1+(u^*-1)T_1$ and $\beta = 1+(u^*-1)T_2$. We first show that the solution to the fixed point equation is strictly decreasing with $q$ . 
       \par
    Let us consider the behavior of the R.H.S  of the fixed point equation, $f(u,q)$, w.r.t. $q$.  Now  
    \begin{align*}
     \frac{\partial f(u,q)}{\partial q} =&\sum_{k_1,k_2}^{\infty}\alpha^{k_1}\beta^{k_2}Q(k_1+k_2){k_1+k_2 \choose k_2}\left(k_2q^{k2-1}r^{k_1}-k_1q^{k2}r^{k_1-1}\right) \\
     &=\beta\sum_{k_1,k_2}^{\infty}Q(k_1+k_2){k_1+k_2 \choose k_2}k_2(\beta q)^{k2-1}(\alpha r)^{k_1} \\
     &-\alpha\sum_{k_1,k_2}^{\infty}Q(k_1+k_2){k_1+k_2 \choose k_2}k_1(\beta q)^{k_2}(\alpha r)^{k_1-1} 
    \end{align*}
    
      From Lemma \ref{lemma1},  
    \begin{align*}
    &\sum_{k_1,k_2}^{\infty}Q(k_1+k_2){k_1+k_2 \choose k_2}k_2(\beta q)^{k2-1}(\alpha r)^{k_1} \\
     &-\sum_{k_1,k_2}^{\infty}Q(k_1+k_2){k_1+k_2 \choose k_2}k_1(\beta q)^{k_2}(\alpha r)^{k_1-1}  \\
     & = 0
    \end{align*}
        Now $\alpha > \beta$ because $T_2 > T_1$, which implies  $ \frac{\partial f(u,q)}{\partial q} <0$.
      \par
    We use the \emph{Implicit Function Theorem} for computing the sign of $\frac{\partial u^*}{\partial q}$.  Let $h(u,q) = f(u,q) - u$. According to the Implicit Function Theorem when $h(u,q) = 0$ 
    \begin{align*}
    \frac{du}{dq} = -\frac{\frac{\partial }{\partial q}h(u,q)}{\frac{\partial }{\partial u}h(u,q)}
    \end{align*}
   We now show that the derivative exists and is greater than zero.  The numerator  $\frac{\partial h(u,q)}{\partial q} = \frac{\partial f(u,q)}{\partial q} < 0 $.
        The denominator is given by 
   $  \frac{\partial h(u,q)}{\partial u} =  \frac{\partial f(u,q)}{\partial u} - 1 $
     \par
        Claim:  $\frac{\partial f(u,q)}{\partial u} < 1$ when $h(u,q)=0$.  We prove this by contradiction. Assume the contrary, i.e.,  $\frac{\partial f(u,q)}{\partial u} \geq 1$. 
     \par
      It can be easily shown that $\frac{\partial f(u,q)}{\partial u} > 0 $ and $\frac{\partial^2 f(u,q)}{\partial u^2} > 0$ for all $u, \ q \ \in [0,1]$. Thus $f$ is a convex function in $u$ for any fixed $q$. Also $f(0,q) > 0$ for all $q \ \in [0,1]$. 
      \par
      Now $h(u,q) = 0$ implies that $u=f(u,q)$, or in other words the curve $f(u,q)$ intersects the line passing through the origin with slope $1$. Since we assumed $\frac{\partial f(u,q)}{\partial u} \geq 1$, i.e., the derivative of $f(u,q)$ is greater than equal to $1$ at the intersection, and since $f(u,q)$ is monotone increasing in $u$, the curve $f(u,q)$ will never again intersect the line passing through the origin with slope $1$. Therefore the equation $u=f(u,q)$ has a unique fixed point. However, this is a contradiction since $u=1$ is always a fixed point and since $\psi > 0$ there is another fixed point less than $1$. Hence,   $\frac{\partial f(u,q)}{\partial u} \geq 1$ is impossible, and therefore $\frac{\partial f(u,q)}{\partial u} < 1$.
      \par
      Thus, the derivative exists and is less than $0$ as  $\frac{\partial h(u,q)}{\partial q} <0$ and $ \frac{\partial }{\partial u}h(u,q)<0$. Since we assumed $h(u,q)=0$, the derivative $\frac{du}{dq}$ is the derivative of the fixed point w.r.t $q$, i.e., it can be represented as $\frac{du^*}{dq}$, where $u^*$ is the fixed point 
    The function $g(u^*,q)$ has the same structure as the function $f(u,q)$, and hence using the same procedure it can be shown that $\frac{\partial g(u,q)}{\partial q} < 0$. The total derivative  $\frac{d\psi}{dq}$ is given by:
    \begin{align*}
    \frac{d\psi}{dq} = \frac{\partial g}{\partial q} + \frac{\partial g}{\partial u}\frac{du^*}{dq}
    \end{align*}   
    Since all the terms on the right hand side of the above equation are negative ($g$ is a non decreasing function of u), $\frac{d\psi}{dq} < 0$. 
    \end{proof}




\bibliographystyle{model1-num-names}
\bibliography{Information_Rumor_Spread}

\begin{thebibliography}{27}
\expandafter\ifx\csname natexlab\endcsname\relax\def\natexlab#1{#1}\fi
\providecommand{\bibinfo}[2]{#2}
\ifx\xfnm\relax \def\xfnm[#1]{\unskip,\space#1}\fi
\bibitem[{Leskovec et~al.(2009)Leskovec, Backstrom, and
  Kleinberg}]{Leskovec2009}
\bibinfo{author}{J.~Leskovec}, \bibinfo{author}{L.~Backstrom},
  \bibinfo{author}{J.~Kleinberg},
\newblock \bibinfo{title}{Meme-tracking and the dynamics of the news cycle},
\newblock in: \bibinfo{booktitle}{Proceedings of the 15th ACM SIGKDD
  international conference on Knowledge discovery and data mining},
  \bibinfo{organization}{ACM}, pp. \bibinfo{pages}{497--506}.
\bibitem[{Leskovec et~al.(2006)Leskovec, Singh, and Kleinberg}]{Leskovec2006}
\bibinfo{author}{J.~Leskovec}, \bibinfo{author}{A.~Singh},
  \bibinfo{author}{J.~Kleinberg},
\newblock \bibinfo{title}{Patterns of influence in a recommendation network},
\newblock in: \bibinfo{booktitle}{Advances in Knowledge Discovery and Data
  Mining}, \bibinfo{publisher}{Springer}, \bibinfo{year}{2006}, pp.
  \bibinfo{pages}{380--389}.
\bibitem[{Sinha and Raghavendra(2006)}]{Sinha2006}
\bibinfo{author}{S.~Sinha}, \bibinfo{author}{S.~Raghavendra},
\newblock \bibinfo{title}{Market polarization in presence of individual choice
  volatility},
\newblock in: \bibinfo{editor}{C.~Bruun} (Ed.), \bibinfo{booktitle}{Advances in
  Artificial Economics}, volume \bibinfo{volume}{584} of
  \textit{\bibinfo{series}{Lecture Notes in Economics and Mathematical
  Systems}}, \bibinfo{publisher}{Springer Berlin Heidelberg},
  \bibinfo{year}{2006}, pp. \bibinfo{pages}{177--190}.
\bibitem[{Goldenberg et~al.(2009)Goldenberg, Han, Lehmann, and
  Hong}]{Goldenberg2009}
\bibinfo{author}{J.~Goldenberg}, \bibinfo{author}{S.~Han},
  \bibinfo{author}{D.~R. Lehmann}, \bibinfo{author}{J.~W. Hong},
\newblock \bibinfo{title}{The role of hubs in the adoption process},
\newblock \bibinfo{journal}{Journal of Marketing} \bibinfo{volume}{73}
  (\bibinfo{year}{2009}) \bibinfo{pages}{1--13}.
\bibitem[{Cohen et~al.(2003)Cohen, Havlin, and Ben-Avraham}]{Cohen2003}
\bibinfo{author}{R.~Cohen}, \bibinfo{author}{S.~Havlin},
  \bibinfo{author}{D.~Ben-Avraham},
\newblock \bibinfo{title}{Efficient immunization strategies for computer
  networks and populations},
\newblock \bibinfo{journal}{Physical review letters} \bibinfo{volume}{91}
  (\bibinfo{year}{2003}) \bibinfo{pages}{247901}.
\bibitem[{Shaw and Schwartz(2010)}]{Shaw2010}
\bibinfo{author}{L.~B. Shaw}, \bibinfo{author}{I.~B. Schwartz},
\newblock \bibinfo{title}{Enhanced vaccine control of epidemics in adaptive
  networks},
\newblock \bibinfo{journal}{Physical Review E} \bibinfo{volume}{81}
  (\bibinfo{year}{2010}) \bibinfo{pages}{046120}.
\bibitem[{Ruan et~al.(2012)Ruan, Tang, and Liu}]{Ruan2012}
\bibinfo{author}{Z.~Ruan}, \bibinfo{author}{M.~Tang}, \bibinfo{author}{Z.~Liu},
\newblock \bibinfo{title}{Epidemic spreading with information-driven
  vaccination},
\newblock \bibinfo{journal}{Phys. Rev. E} \bibinfo{volume}{86}
  (\bibinfo{year}{2012}) \bibinfo{pages}{036117}.
\bibitem[{Starnini et~al.(2013)Starnini, Machens, Cattuto, Barrat, and
  Pastor-Satorras}]{Starnini2013}
\bibinfo{author}{M.~Starnini}, \bibinfo{author}{A.~Machens},
  \bibinfo{author}{C.~Cattuto}, \bibinfo{author}{A.~Barrat},
  \bibinfo{author}{R.~Pastor-Satorras},
\newblock \bibinfo{title}{Immunization strategies for epidemic processes in
  time-varying contact networks},
\newblock \bibinfo{journal}{Journal of theoretical biology}
  \bibinfo{volume}{337} (\bibinfo{year}{2013}) \bibinfo{pages}{89--100}.
\bibitem[{Peng et~al.(2013)Peng, Xu, Fu, and Zhou}]{Peng2013}
\bibinfo{author}{X.-L. Peng}, \bibinfo{author}{X.-J. Xu},
  \bibinfo{author}{X.~Fu}, \bibinfo{author}{T.~Zhou},
\newblock \bibinfo{title}{Vaccination intervention on epidemic dynamics in
  networks},
\newblock \bibinfo{journal}{Phys. Rev. E} \bibinfo{volume}{87}
  (\bibinfo{year}{2013}) \bibinfo{pages}{022813}.
\bibitem[{Kempe et~al.(2003)Kempe, Kleinberg, and Tardos}]{Kempe2003}
\bibinfo{author}{D.~Kempe}, \bibinfo{author}{J.~Kleinberg},
  \bibinfo{author}{E.~Tardos},
\newblock \bibinfo{title}{Maximizing the spread of influence through a social
  network},
\newblock in: \bibinfo{booktitle}{Proceedings of the ninth ACM SIGKDD
  international conference on Knowledge discovery and data mining},
  \bibinfo{organization}{ACM}, pp. \bibinfo{pages}{137--146}.
\bibitem[{Chen et~al.(2009)Chen, Wang, and Yang}]{Chen2009a}
\bibinfo{author}{W.~Chen}, \bibinfo{author}{Y.~Wang},
  \bibinfo{author}{S.~Yang},
\newblock \bibinfo{title}{Efficient influence maximization in social networks},
\newblock in: \bibinfo{booktitle}{Proceedings of the 15th ACM SIGKDD
  international conference on Knowledge discovery and data mining},
  \bibinfo{organization}{ACM}, pp. \bibinfo{pages}{199--208}.
\bibitem[{Chen et~al.(2010)Chen, Wang, and Wang}]{Chen2010a}
\bibinfo{author}{W.~Chen}, \bibinfo{author}{C.~Wang},
  \bibinfo{author}{Y.~Wang},
\newblock \bibinfo{title}{Scalable influence maximization for prevalent viral
  marketing in large-scale social networks},
\newblock in: \bibinfo{booktitle}{Proceedings of the 16th ACM SIGKDD
  international conference on Knowledge discovery and data mining},
  \bibinfo{organization}{ACM}, pp. \bibinfo{pages}{1029--1038}.
\bibitem[{Karnik and Dayama(2012)}]{Karnik2012}
\bibinfo{author}{A.~Karnik}, \bibinfo{author}{P.~Dayama},
\newblock \bibinfo{title}{Optimal control of information epidemics},
\newblock in: \bibinfo{booktitle}{Communication Systems and Networks
  (COMSNETS), 2012 Fourth International Conference on},
  \bibinfo{organization}{IEEE}, pp. \bibinfo{pages}{1--7}.
\bibitem[{Dayama et~al.(2012)Dayama, Karnik, and Narahari}]{Dayama2012}
\bibinfo{author}{P.~Dayama}, \bibinfo{author}{A.~Karnik},
  \bibinfo{author}{Y.~Narahari},
\newblock \bibinfo{title}{Optimal incentive timing strategies for product
  marketing on social networks},
\newblock in: \bibinfo{booktitle}{Proceedings of the 11th International
  Conference on Autonomous Agents and Multiagent Systems-Volume 2},
  \bibinfo{organization}{International Foundation for Autonomous Agents and
  Multiagent Systems}, pp. \bibinfo{pages}{703--710}.
\bibitem[{Kandhway and Kuri(2014{\natexlab{a}})}]{Kandhway2014a}
\bibinfo{author}{K.~Kandhway}, \bibinfo{author}{J.~Kuri},
\newblock \bibinfo{title}{Optimal control of information epidemics modeled as
  maki thompson rumors},
\newblock \bibinfo{journal}{Communications in Nonlinear Science and Numerical
  Simulation} \bibinfo{volume}{19} (\bibinfo{year}{2014}{\natexlab{a}})
  \bibinfo{pages}{4135--4147}.
\bibitem[{Kandhway and Kuri(2014{\natexlab{b}})}]{Kandhway2014}
\bibinfo{author}{K.~Kandhway}, \bibinfo{author}{J.~Kuri},
\newblock \bibinfo{title}{How to run a campaign: Optimal control of sis and sir
  information epidemics},
\newblock \bibinfo{journal}{Applied Mathematics and Computation}
  \bibinfo{volume}{231} (\bibinfo{year}{2014}{\natexlab{b}}) \bibinfo{pages}{79
  -- 92}.
\bibitem[{Kandhway and Kuri(2014{\natexlab{c}})}]{Kandhway2014b}
\bibinfo{author}{K.~Kandhway}, \bibinfo{author}{J.~Kuri},
\newblock \bibinfo{title}{Campaigning in heterogeneous social networks: Optimal
  control of si information epidemics},
\newblock \bibinfo{journal}{Networking, IEEE/ACM Transactions on}
  \bibinfo{volume}{PP} (\bibinfo{year}{2014}{\natexlab{c}})
  \bibinfo{pages}{1--1}.
\bibitem[{Barrat et~al.(2008)Barrat, Barthelemy, and Vespignani}]{Barrat2008}
\bibinfo{author}{A.~Barrat}, \bibinfo{author}{M.~Barthelemy},
  \bibinfo{author}{A.~Vespignani}, \bibinfo{title}{Dynamical processes on
  complex networks}, \bibinfo{publisher}{Cambridge University Press},
  \bibinfo{year}{2008}.
\bibitem[{Molloy and Reed(1995)}]{molloy1995}
\bibinfo{author}{M.~Molloy}, \bibinfo{author}{B.~Reed},
\newblock \bibinfo{title}{A critical point for random graphs with a given
  degree sequence},
\newblock \bibinfo{journal}{Random Structures \& Algorithms}
  \bibinfo{volume}{6} (\bibinfo{year}{1995}) \bibinfo{pages}{161--180}.
\bibitem[{Newman(2002)}]{Newman2002}
\bibinfo{author}{M.~E. Newman},
\newblock \bibinfo{title}{Spread of epidemic disease on networks},
\newblock \bibinfo{journal}{Physical review E} \bibinfo{volume}{66}
  (\bibinfo{year}{2002}) \bibinfo{pages}{016128}.
\bibitem[{Newman(2010)}]{Newman2010}
\bibinfo{author}{M.~Newman}, \bibinfo{title}{Networks: An Introduction},
  \bibinfo{publisher}{Oxford University Press}, \bibinfo{year}{2010}.
\bibitem[{Newman and Park(2003)}]{Newman2003b}
\bibinfo{author}{M.~E.~J. Newman}, \bibinfo{author}{J.~Park},
\newblock \bibinfo{title}{Why social networks are different from other types of
  networks},
\newblock \bibinfo{journal}{Phys. Rev. E} \bibinfo{volume}{68}
  (\bibinfo{year}{2003}) \bibinfo{pages}{036122}.
\bibitem[{Newman(2002)}]{Newman2002b}
\bibinfo{author}{M.~E.~J. Newman},
\newblock \bibinfo{title}{Assortative mixing in networks},
\newblock \bibinfo{journal}{Phys. Rev. Lett.} \bibinfo{volume}{89}
  (\bibinfo{year}{2002}) \bibinfo{pages}{208701}.
\bibitem[{Noh(2007)}]{Noh2007}
\bibinfo{author}{J.~D. Noh},
\newblock \bibinfo{title}{Percolation transition in networks with degree-degree
  correlation},
\newblock \bibinfo{journal}{Phys. Rev. E} \bibinfo{volume}{76}
  (\bibinfo{year}{2007}) \bibinfo{pages}{026116}.
\bibitem[{Wu and Liu(2008)}]{Wu2008}
\bibinfo{author}{X.~Wu}, \bibinfo{author}{Z.~Liu},
\newblock \bibinfo{title}{How community structure influences epidemic spread in
  social networks},
\newblock \bibinfo{journal}{Physica A: Statistical Mechanics and its
  Applications} \bibinfo{volume}{387} (\bibinfo{year}{2008})
  \bibinfo{pages}{623 -- 630}.
\bibitem[{Salathe and Jones(2010)}]{Salathe2010}
\bibinfo{author}{M.~Salathe}, \bibinfo{author}{J.~H. Jones},
\newblock \bibinfo{title}{Dynamics and control of diseases in networks with
  community structure},
\newblock \bibinfo{journal}{PLoS Comput Biol} \bibinfo{volume}{6}
  (\bibinfo{year}{2010}) \bibinfo{pages}{e1000736}.
\bibitem[{Nekovee et~al.(2007)Nekovee, Moreno, Bianconi, and
  Marsili}]{Nekovee2007}
\bibinfo{author}{M.~Nekovee}, \bibinfo{author}{Y.~Moreno},
  \bibinfo{author}{G.~Bianconi}, \bibinfo{author}{M.~Marsili},
\newblock \bibinfo{title}{Theory of rumour spreading in complex social
  networks},
\newblock \bibinfo{journal}{Physica A: Statistical Mechanics and its
  Applications} \bibinfo{volume}{374} (\bibinfo{year}{2007})
  \bibinfo{pages}{457--470}.

\end{thebibliography}







\end{document}